\title{The minimum cost query problem on matroids with uncertainty areas} %TODO Please add
\author{Arturo I. Merino}{Department of Mathematical Engineering and CMM, Universidad de Chile \& UMI-CNRS 2807, Santiago, Chile}{amerino@dim.uchile.cl}{0000-0002-1728-6936}{}%TODO mandatory, please use full name; only 1 author per \author macro; first two parameters are mandatory, other parameters can be empty. Please provide at least the name of the affiliation and the country. The full address is optional
\author{José A. Soto}{Department of Mathematical Engineering and CMM, Universidad de Chile \& UMI-CNRS 2807, Santiago, Chile}{jsoto@dim.uchile.cl}{0000-0003-2219-8401}{}
\authorrunning{A.\,I. Merino and J.\,A. Soto}%TODO mandatory. First: Use abbreviated first/middle names. Second (only in severe cases): Use first author plus 'et al.'
\keywords{Minimum spanning tree, matroids, uncertainty, queries}%TODO mandatory; please add comma-separated list of keywords
\newcommand{\jscom}[1]{}%\todo[color=orange!25!white]{JS: #1}}
\newcommand{\amcom}[1]{}%\todo[color=red!25!white]{AM: #1}}
\begin{document}
%\raggedbottom

\newcommand{\spn}{\operatorname{span}}
\newcommand{\cospn}{\operatorname{cospan}}
\newcommand{\cl}{\operatorname{cl}}
\newcommand{\conv}{\operatorname{conv}}
\newcommand{\argmax}{\operatorname{argmax}}
\newcommand{\A}{\mathcal A}
\newcommand{\B}{\mathcal B}
\newcommand{\C}{\mathfrak C}
\newcommand{\R}{\mathcal R}
\newcommand{\M}{\mathcal M}
\newcommand{\I}{\mathcal I}
\newcommand{\Cofeas}{\texttt{Cofeas}}
\newcommand{\lo}{\texttt{low}}
\newcommand{\mi}{\texttt{mid}}
\newcommand{\hi}{\texttt{high}}
\newcommand{\bo}{\texttt{both}}
\newcommand{\core}{\texttt{core}}	
\newcommand{\mat}{\texttt{mat}}

\maketitle

\begin{abstract}
We study the minimum weight basis problem on matroid when elements' weights are uncertain. For each element we only know a set of possible values (an uncertainty area) that contains its real weight. In some cases there exist bases that are uniformly optimal, that is, they are minimum weight bases for every possible weight function obeying the uncertainty areas. In other cases, computing such a basis is not possible unless we perform some queries for the exact value of some elements. 

Our main result is a polynomial time algorithm for the following problem. Given a matroid with uncertainty areas and a query cost function on its elements, find the set of elements of minimum total cost that we need to simultaneously query such that, no matter their revelation, the resulting instance admits a uniformly optimal base. We also provide combinatorial characterizations of all uniformly optimal bases, when one exists; and of all sets of queries that can be performed so that after revealing the corresponding weights the resulting instance admits a uniformly optimal base.

\end{abstract}
\section{Introduction}

We study fundamental combinatorial optimization problems on weighted structures where the numerical data is uncertain but it can be queried at a cost. We focus on the problem of finding a minimum weight base of a matroid under uncertainty, a problem that includes finding the smallest $k$ elements of a list and the minimum spanning tree (MST) problem. In our setting, for every element $e$ of the matroid we know a set $\A(e)$, called uncertainty area, of possible values that contains its real weight $w_e$. We can reveal this real weight by paying some query cost $c_e$. We assume that the queries are done in a non-adaptive way, or equivalently, that all the elements queried reveal their values at the same time. A set of elements $F$ is a \emph{feasible query} if for every possible revelation of the weights of $F$ it is possible to compute a minimum weight base $T$ of the resulting instance. The task of the \emph{Minimum Cost Query Problem on Matroids} is to determine a minimum-cost feasible query.

\begin{figure}
\tikzstyle{vertex}=[circle,fill=black!25,minimum size=10pt,inner sep=0pt]
\tikzstyle{edge} = [draw,thick,-]
\tikzstyle{weight} = [font=\small]
\begin{subfigure}[t]{0.3\textwidth}
\centering
\begin{tikzpicture}[scale=0.8, auto,swap]
\node[vertex] (a) at (0,2) {};
\node[vertex] (b) at (2,1) {};
\node[vertex] (c) at (2,3) {};
\path[edge] (a) -- node[weight] {$[0,2]$} (b);
\path[edge] (b) -- node[weight] {$[4,9]$} (c);
\path[edge] (c) -- node[weight] {$[1,3]$} (a);
\end{tikzpicture}
\caption{The empty set is feasible.}
\end{subfigure}\hfill
\begin{subfigure}[t]{0.3\textwidth}
\centering
\begin{tikzpicture}[scale=0.8, auto,swap]
\node[vertex] (a) at (0,2) {};
\node[vertex] (b) at (2,1) {};
\node[vertex] (c) at (2,3) {};
\path[edge] (a) -- node[weight] {$[0,1]$} (b);
\path[edge] (b) -- node[weight] {$[0,1]$} (c);
\path[edge] (c) -- node[weight] {$[0,1]$} (a);
\end{tikzpicture}
\caption{Only the full set is feasible.}
\end{subfigure}\hfill
\begin{subfigure}[t]{0.3\textwidth}
\centering
\begin{tikzpicture}[scale=0.8, auto,swap]
\node[vertex] (a) at (0,2) {};
\node[vertex] (b) at (2,1) {};
\node[vertex] (c) at (2,3) {};
\path[edge] (a) -- node[weight] {$\{0,1\}$} (b);
\path[edge] (b) -- node[weight] {$\{0,1\}$} (c);
\path[edge] (c) -- node[weight] {$\{0,1\}$} (a);
\end{tikzpicture}
\caption{Any 2-set is feasible.}
\end{subfigure}
\caption{Feasible queries.   }\label{fig:example}
\end{figure}
To better illustrate this, consider the problem of computing an MST of a triangle graph in three possible situations as shown in Figure 1. In the first situation, the edges with areas $[0,2]$ and $[1,3]$ always form an MST, so we don't need to query any element. In this case we say that the matroid (the graph) admits a \emph{uniformly optimal basis}, that is, a basis (a spanning tree) having minimum weight for every possible realization of the elements' weights. In the second situation, all edges have uncertainty area $[0,1]$ and the only feasible query is the entire set of edges. For if we only query two edges, we could be in a situation where both have weight $1/2$. With that information we cannot compute an MST: if the unqueried edge $e$ had weight 0, then $e$ must be in the  MST. However, if it had weight 1, then $e$ cannot be in an MST. In the last situation, all uncertainty areas are finite: they have two elements $\{0,1\}$. Here, every set of size two is feasible. Indeed, if both elements reveal a weight of $0$, then they form an MST. Otherwise, the set obtained by deleting any edge with weight $1$ is an MST.

\paragraph*{Paper outline and results} In Section \ref{sec:Preliminaries} we give formal definitions and show a simple but strong result: the uniformly optimal bases (UOB) of an uncertainty matroid form the bases of a second matroid. In Section \ref{sec:Colors} we extend the classic MST red and blue rules to uncertainty matroids, introducing the concept of \emph{colored} elements, and study their properties. We then show that an uncertainty matroid admits a UOB if and only if all its elements are colored, and use this to give a combinatorial description of the matroid of uniformly optimal bases. We provide polynomial time algorithms for testing the existence of a UOB and for finding one if it exists. 
In Section \ref{sec:Feasible} we study the minimum cost feasible query problem in detail. By using our coloring framework we construct a partition of the elements into groups that characterize minimal feasible queries. We show that every minimal feasible query set is formed by taking the first group (denoted as the core) completely and by deleting exactly one element from each other group. Our main result is an algorithm to find this partition that we  use to fully solve the minimum cost feasible query problem on matroids. Unlike related work on the MST, the uncertainty areas in our setting can be arbitrary sets of real numbers, and not just intervals. Our algorithms only assume access to an independence oracle for the matroid, and a very mild type of access to the uncertainty areas. At the end of this section we show that the interval uncertainty area case is special, as there is a unique minimal size feasible query. We also relate the solutions for the MST with $\{0,1\}$-areas case with the 2-connected components of the graph. Finally, in Section \ref{sec:algorithm} we discuss implementation details of our algorithms.

\paragraph*{Related Work}

Traditional research in optimization with uncertain data mostly focuses on finding solutions whose value is good, either in the worst case (robust optimization) or in some probabilistic sense (stochastic optimization), without gaining new information about the uncertain data. Our problem contributes to the \emph{query model setting}, a different approach that has gained some strength in the last years. In this model one assumes that the algorithm can learn the exact value of an uncertain input data by paying some query cost in order to solve a certain problem P (e.g., determining  the MST of a graph). The aim is to minimize the cost of the queries while guaranteeing that an exact/approximate solution can be computed. Work in this area (see \cite{ErlebachH15} for a survey)  falls into three main categories:

\subparagraph*{Adaptive online} In this setting, the algorithm can query elements one by one, using the information revealed until a certain step to guide the decision of the next element to query. Even though algorithms for this version can be analyzed with a traditional worst-case approach (e.g., minimizing the depth of the decision tree associated to the algorithm's strategy), most of the work in this setting prefers to measure  performance in terms of competitive analysis, comparing the number of queries an algorithm makes with the minimum number of queries that an adversary algorithm (that knows the real values beforehand) would make in order to verify that an answer is correct. Probably the first one to consider this model is Kahan \cite{Kahan91} who provide optimal online algorithms to compute \emph{all the elements} achieving the maximum, median and minimum gap of a list of closed intervals. Feder et al. \cite{FederMPOW03} devise optimal online competitive algorithms to determine the numerical \emph{value} of the median, and more generally, the $K$-th top element of the list within some prescribed tolerance. Bruce et al. \cite{BruceHKR05} introduce a general method, called the \emph{witness algorithm}, for adaptive online problems with open intervals and singletons uncertainty areas. They apply this method to geometric problems such as finding all maximum points in the plane from a family with uncertain coordinates. Erlebach et al. \cite{ErlebachHKMR08} studied the MST problem under two types of uncertainty, the edge uncertainty one (which is the same as ours) and the vertex uncertainty setting, in which the graph is complete, the vertices are points in the plane whose coordinates are uncertain, and the weight of an edge is the distance between its endpoints. They get 2 and 4 competitive algorithms, respectively, for both types of uncertainty, under open intervals and singletons areas, which is optimal for deterministic algorithms. The algorithm for edge uncertainty, denoted as U-RED, was later extended by Erlebach, Hoffmann and Kammer \cite{Erlebach0K16} to the minimum weight base problem on matroids achieving an optimal 2 competitive algorithm. Megow, Mei{\ss}ner and Skutella \cite{MegowMS17} show that by using randomization one can do better, lowering the competitive ratio down to $1+1/\sqrt{2}$. They also studied the non-uniform cost case. Gupta et al. \cite{GuptaSS16} studied variants where queries return refined 
estimates of the areas, instead of a single value.
\subparagraph*{Verification} The verification problem is the one the offline adversary of the previous setting has to solve. That is, given both the uncertainty areas and a family of assumed real values, to determine the minimum number of queries one has to make so that, if the values obtained from the queries and the assumed values coincide, then no more queries are needed in order to obtain an optimal solution. Charalambous and Hoffman \cite{CharalambousH13} show that the verification problem for  maximal points in the plane is NP-hard for uncertainty areas of size at most 2. Erlebach and Hoffmann  \cite{ErlebachH15} show that the verification problem of MST with (open interval and singleton) uncertainty in the edges is in P, while that for vertex uncertainty is NP-hard.

\subparagraph*{Non-adaptive online.} This setting encompasses our work and is sometimes called the \emph{offline} problem. In it, an algorithm must determine a set $F$ of queries to perform simultaneously, in order to have enough information to solve the problem. The only conceptual difference between the non-adaptive online problem and the verification one, is that in the latter, the algorithm can make use of the real values of the elements to guide the decision of which queries to perform, while in the former, that information is not available. Feder et al. \cite{FederMPOW03} provide optimal algorithms for finding the $K$-th top element of a list up to  additive tolerance. Later, Feder et al. \cite{FederMOOP07} consider the problem of finding the shortest $s$-$t$ path on a DAG with closed-intervals uncertainty on the edges. They show that determining the length of the shortest $s$-$t$ path within a given additive error is neither in NP nor in co-NP unless NP=co-NP, and provide exact algorithms for some special cases. To the best of our knowledge, the MST and, more generally, the matroid case have not been considered before this work.

\subparagraph*{Further related work} A common approach to deal with closed-interval uncertainty for a problem, without querying extra information, is to find a solution that minimizes the maximal regret, that is, the difference between the (real) weight of the chosen solution and the weight of the best solution that could have been picked had the true weights been known. Zero maximal regret bases and uniformly optimal bases of a matroid coincide. In the context of the MST problem, Yaman et al. \cite{YamanKP01} characterize trees with zero regret when they exists. In \cite{KasperskiZ06}, Kaspersky and Zielinsky give a 2-approximation algorithm  for the minmax regret problem on general matroids with interval data, and in \cite{KasperskiZ07} they give algorithms to find zero maximal regret bases. It is worth noting that all these results assume interval areas on the elements, while we allow for arbitrary uncertainty areas. 

\section{Preliminaries. Uniformly Optimal Bases.} \label{sec:Preliminaries}
We assume familiarity with basic concepts in matroid theory such as bases, independent sets, span, circuit, cocircuits and duality, contraction, deletion and matroid connectivity. For an introduction and specific results, we refer to Oxley’s book \cite{Oxley}. However, most of this paper can be understood by having the graphic matroid of a connected graph in mind. This is a matroid whose elements are the edges. The bases, independent sets, circuits and cocircuits, are the spanning trees, forests, cycles and minimal edge cut-sets, respectively. An element $e$ is in the span (cospan) of a set $F$  if there is a circuit (cocircuit) in $F\cup \{e\}$ containing $e$. We also use the standard notation $X+e$ and $X-e$ to denote $X\cup \{e\}$ and $X\setminus \{e\}$.

\begin{definition}[Uncertainty Matroid] An uncertainty matroid is a pair $(\M,\A)$ where $\M=(E, \I)$ is a matroid and $\A\colon E \to 2^\mathbb R \setminus \{\emptyset\}$ is a function mapping each element $e\in E$ in the ground set to a nonempty set $\A(e)$ of real numbers denoted the uncertainty area of $e$. We denote $\inf \A(e)$ by $L_e$ and $\sup \A(e)$ by $U_e$.
\end{definition}

If $\A(e)$ is a singleton, we say that the element $e$ is \emph{certain}, otherwise, we say it is \emph{uncertain}. If all the elements are certain, then we can identify $\A$ with the associated weight function $w\colon E\to \mathbb R$, such that $\A(e)=\{w_e\}$ so that $(\M,w)$ becomes a weighted matroid.

\begin{remark}
For our algorithms, we assume access to matroid $\M$ via an independence oracle. We also need a very mild access to the uncertainty areas. More precisely, we assume that both $L_e$ and $U_e$ are known for every element, and that for every pair of (not necessarily distinct) elements $e$ and $f$  we can test if $(L_e,U_e)\cap \A(f)$ is empty in constant time. A polynomial time algorithm will, therefore, only use a polynomial number of calls to the independence oracle and to the uncertainty areas.
Furthermore, for simplicity we will assume that all the infima and suprema $L_e$ and $U_e$ are finite. Otherwise we can apply a suitable strictly increasing function mapping the reals to a bounded set (such as $\arctan$), and work in its image instead. 
\end{remark}

Intuitively, an uncertainty matroid models the situation in which we do not know the actual weight of every element $e$ in the matroid, but we do know a set $\A(e)$ containing it. We can learn the actual weight of $e$ by querying it. In this work we are concerned with non-adaptive (simultaneous) queries. The new uncertainty area function obtained after querying a subset of elements will be called a \emph{revelation} of $\A$. The formal definition is below.

\begin{definition}[Revelations and realizations] Let $X\subseteq E$. A revelation of $X$ in $(\M,\A)$ is a function $\B\colon E \to 2^\mathbb R\setminus \{\emptyset\}$ such that:
	\begin{romanenumerate}
		\item $\forall e \in X$, $\B(e)=\{b_e\}\subseteq \A(e)$ is a singleton, and
		\item $\forall e \in E\setminus X$, $\B(e)=\A(e)$.
	\end{romanenumerate}
	In particular, $(\M,\B)$ is also an uncertainty matroid. A \emph{realization} is a revelation of the entire ground set $E$. The collection of all revelations of $X$ in $(\M,\A)$ is denoted by $\R(X,\A)$.
\end{definition}

Suppose we want to compute a minimum weight basis of a matroid, but we only know uncertainty areas for its elements. In certain situations (e.g., Figure \ref{fig:example} (a)), we can find sets that are optimal bases for every realization, we call them uniformly optimal bases.

\begin{definition}[Uniformly optimal bases] A set $T\subseteq E$ is a uniformly optimal basis of $(\M,\A)$ (or simply, an $\A$-basis) if for every realization $w$, $T$ is a minimum weight basis (a $w$-basis).
\end{definition}
 Recall that a nonempty family of sets form the bases of a matroid if and only if they satisfy the strong basis exchange axiom.  Our first basic result is the following:

\begin{lemma}\label{lem:uob-matroid}
	Let $\mathfrak B$ be the collection of all uniformly optimal bases of $(\M,\A)$. Suppose $\mathfrak B\neq \emptyset$ and let $T_1$ and $T_2$ be two sets in $\mathfrak B$. If $e$ is an element in $T_1\setminus T_2$ then:
	\begin{romanenumerate}
		\item $e$ is certain, and
		\item strong basis exchange holds, i.e., there is an element $f$ in $T_2\setminus T_1$ such that both $T_1-e+f$ and $T_2-f+e$ are in $\mathfrak B$.
	\end{romanenumerate}
	In particular, if $\mathfrak B\neq \emptyset$, then $\mathfrak B$ is the set of bases of a matroid, that we denote by $\mat(\M,\A)$
\end{lemma}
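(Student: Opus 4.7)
My plan is to prove (i) and (ii) in order; claim (iii) then follows immediately from the standard characterization of matroids via the strong basis exchange axiom, using that all UOBs have the same cardinality (being bases of $\M$).

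For (i), the idea is to perturb the weight of $e$ alone. Fix any realization $w$, set $c := w_e$, and pick any $a \in \A(e)$. Let $w'$ be the realization agreeing with $w$ on $E-e$ and satisfying $w'_e = a$. Since $T_1, T_2 \in \mathfrak B$, both are $w$-bases and both are $w'$-bases, so $w(T_1) = w(T_2)$ and $w'(T_1) = w'(T_2)$. As $e \in T_1 \setminus T_2$, we have $w'(T_2) = w(T_2)$ and $w'(T_1) = w(T_1) + (a - c)$, forcing $a = c$. Since $a \in \A(e)$ was arbitrary, $\A(e) = \{c\}$, i.e., $e$ is certain.

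For (ii), let $c$ denote the now well-defined weight of $e$. Consider the fundamental circuit $C := C(e, T_2)$ of $e$ with respect to $T_2$ and the fundamental cocircuit $C^* := E \setminus \spn(T_1 - e)$ of $e$ with respect to $T_1$; both contain $e$. Recall that $T_2 - f + e$ is a basis iff $f \in C - e$, and $T_1 - e + f$ is a basis iff $f \in C^* - e$. If $f \in C - e$, then since $T_2$ is a UOB, $w(T_2 - f + e) \geq w(T_2)$ for every realization $w$, giving $w_f \leq c$; hence $U_f \leq c$. Symmetrically, if $f \in C^* - e$, then $w(T_1 - e + f) \geq w(T_1)$ forces $w_f \geq c$, hence $L_f \geq c$. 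Therefore any $f \in (C \cap C^*) - e$ satisfies $c \leq L_f \leq U_f \leq c$, making it certain with weight $c$. Strong basis exchange within $\M$ applied to $T_1, T_2, e$ guarantees the existence of some $f \in T_2 \setminus T_1$ with both $T_1 - e + f$ and $T_2 - f + e$ bases; by the characterization above this $f$ lies in $(C \cap C^*) - e$. For such $f$, every realization satisfies $w(T_1 - e + f) = w(T_1)$ and $w(T_2 - f + e) = w(T_2)$, so both sets are $w$-bases for every $w$, and hence both are UOBs.

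The main obstacle lies in (ii): producing a single exchange element $f$ that is valid uniformly across \emph{all} realizations, rather than one $f$ per realization. The key insight is that the circuit side forces $U_f \leq c$ while the cocircuit side forces $L_f \geq c$, so every strong-exchange candidate is automatically certain with weight exactly $c$; the existence of at least one such candidate then reduces to ordinary strong basis exchange in the underlying matroid $\M$.
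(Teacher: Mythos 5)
Your proof is correct, and it takes a somewhat different route from the paper's. The paper applies strong basis exchange in $\M$ first, obtaining a single partner $f\in T_2\setminus T_1$ with $T_1-e+f$ and $T_2-f+e$ bases, and then derives everything from the single observation that $w_e=w_f$ must hold in every realization (if $w_e>w_f$ then $T_1-e+f$ beats $T_1$; if $w_e<w_f$ then $T_2-f+e$ beats $T_2$); this simultaneously yields that $e$ is certain and that both exchanged sets are uniformly optimal. You instead prove (i) by a self-contained perturbation of the weight of $e$ alone, using only that $w(T_1)=w(T_2)$ under every realization, which does not need the exchange partner at all; then, for (ii), you use the fundamental circuit of $e$ in $T_2+e$ and the fundamental cocircuit of $e$ with respect to $T_1$ to show the stronger statement that \emph{every} element $f$ for which both $T_1-e+f$ and $T_2-f+e$ are bases must be certain with weight exactly $w_e$ (the circuit side forces $U_f\le w_e$, the cocircuit side forces $L_f\ge w_e$), so the $f$ produced by strong exchange automatically works. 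The paper's argument is shorter; yours cleanly decouples (i) from (ii) and yields the extra information that all strong-exchange candidates in $(C\cap C^*)-e$ are certain with weight $w_e$, not just the chosen one. Both ultimately rest on strong basis exchange of $\M$ and the same weight-comparison idea, and your appeal to the strong-exchange characterization of basis families for the final claim matches the paper's.
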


\begin{proof}
	Let $e \in T_1 \setminus T_2$. By strong basis exchange of $\M$, there is an element $f\in T_2 \setminus T_1$ such that both $T_1-e+f$ and $T_2+e-f$ are bases of $\M$. Assume first by contradiction that $e$ is uncertain, then there is a realization $w\in \R(E,\A)$ such that $w_e \neq w_f$. If $w_e>w_f$ then $w(T_1-e+f)<w(T_1)$ contradicting the fact that $T_1$ is uniformly optimal. On the other hand, if $w_e < w_f$ then $w(T_2+e-f)<w(T_2)$ contradicting that $T_2$ is uniformly optimal. We conclude not only that $e$ is certain, but also that in every realization $w_f=w_e$. In particular, for every realization $w$, $w(T_1-e+f)=w(T_1)=w(T_2)=w(T_2-f+e)$, i.e. both $T_1-e+f$ and $T_2-f+e$ are uniformly optimal bases. \end{proof}

\section{Blue and red rules for uncertainty matroids.}\label{sec:Colors}
By Lemma \ref{lem:uob-matroid}, we conclude that if an uncertain element $e$ is in some uniformly optimal basis then it is in \emph{every} uniformly optimal basis. Our next task is to characterize the set of uncertain elements that are in every uniformly optimal basis. Now it is useful to remember the classic blue and red rules for computing an MST. An edge of a weighted graph is called blue if it is in at least one MST, and it is called red if it is outside at least one MST. Virtually every algorithm for the MST work in steps: if it detects a red edge from the current graph, it deletes it from the graph, and if it detects a blue edge then it adds it to the solution and contracts it. The following definitions extend the coloring notion to uncertainty matroids. 

\begin{definition}[Blue and red elements]\label{blueAndRed} An element  $e\in E$ is blue (resp. red) if for every realization $w\in \R(E,\A)$ there exists a $w$-basis $T$ such that $e\in T$ (resp. $e\notin T$). We say that $e$ is colored if $e$ is red or blue (or both at the same time), otherwise we say that $e$ is uncolored.\end{definition}

Note that $e$ is blue (resp. red) on $(\M,\A)$ if and only if $e$ is blue (resp. red) on $(\M, w)$ for every realization $w\in \R(E,\A)$. Standard matroid arguments show that for any such $w$, if an element $e$ is any heaviest element of a circuit, then it avoids some $w$-basis (i.e., it is red on $(\M,w)$), and if it is the unique heaviest element, then it cannot be in a $w$-basis (i.e., it is not blue on $(\M,w)$). Blue elements in graphs / matroids with interval uncertainty areas have been studied before under the name of strong edges \cite{YamanKP01} or necessarily optimal elements \cite{KasperskiZ07}.  We start with a basic result on colored elements, they are preserved under revelations. 

\begin{lemma}\label{colorPreservationRevelation}
	Let $X\subseteq E$ and $\B \in \R(X,\A)$. If $e$ is blue (resp. red) in $(\M,\A)$, then it is blue (resp. red) in $(\M,\B)$.
\end{lemma}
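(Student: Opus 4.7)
The plan is to observe that a revelation of $X$ only narrows down the possible weights of the elements of $X$, while leaving the uncertainty areas of the remaining elements intact. Consequently, every realization of $(\M,\B)$ is also a realization of $(\M,\A)$, and the conclusion follows because being colored is a property quantified universally over all realizations.

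More concretely, first I would show the containment $\R(E,\B) \subseteq \R(E,\A)$. Given any $w \in \R(E,\B)$, I need to check that $w(e) \in \A(e)$ for every $e \in E$. For $e \in E\setminus X$ this is immediate since $\B(e)=\A(e)$ by the definition of revelation. For $e \in X$, by definition $\B(e) = \{b_e\} \subseteq \A(e)$, and since $w \in \R(E,\B)$ satisfies $w(e)\in \B(e)$, we have $w(e)=b_e \in \A(e)$. Therefore $w\in \R(E,\A)$.

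Now suppose $e$ is blue in $(\M,\A)$. By \Cref{blueAndRed}, for every realization $w \in \R(E,\A)$ there exists a $w$-basis $T$ with $e\in T$. In particular, this holds for every $w\in \R(E,\B)$, so by \Cref{blueAndRed} again $e$ is blue in $(\M,\B)$. The argument for red is identical, requiring only that the $w$-basis avoids $e$.

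There is essentially no obstacle here: the lemma is a monotonicity statement and the only ingredient is the set-theoretic inclusion $\R(E,\B) \subseteq \R(E,\A)$. The subtle point to double-check is precisely that containment in the case $e \in X$, where one must remember that a revelation tightens an uncertainty area to a singleton \emph{inside} the original area.
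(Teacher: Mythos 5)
Your proposal is correct and follows the same route as the paper: both rest on the containment $\R(E,\B)\subseteq \R(E,\A)$ and the fact that blueness/redness is universally quantified over realizations, with your write-up merely spelling out the containment check that the paper leaves implicit.
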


\begin{proof}
    Suppose $e$ is blue in $(\M,\A)$ and consider any $w\in \R(E,\B)$. Since $\R(E,\B)\subseteq \R(E,\A)$ we get that there exists some $w$-basis $T$ such that $e\in T$, implying that $e$ is blue in $(\M,\B)$. The proof works analogously if $e$ is red in $(\M,\A)$. 
\end{proof}

We can characterize blue and red elements of an uncertainty matroid using its span and cospan functions\footnote{We recall that $\spn(X)$ for $X\subseteq E$ is the unique maximal set $U\supseteq X$ with the same rank as $X$. The cospan of $X$, $\cospn(X)$ is its span in the dual matroid. The span of a matroid and its cospan are related by the expression $\cospn(X)=X \cup \{e\in E : e\notin \spn[(E-e)\setminus X] \}$} of a set $X$ in the matroid $\M$ (resp., in the dual matroid $\M^*$). The span and cospan of a set can be computed using a polynomial number of calls to the independence oracle.

\begin{lemma}\label{blueAndRedCharacterization}
	Let $F(e)=\{f \in E -e : L_f < U_e \}$ and $F^*(e)=\{f \in E-e : L_e < U_f \}$ .
	\begin{enumerate}[(i)]
	    \item \label{br1} $e \in E$ is blue if and only if $e\notin \spn F(e)$.
		\item \label{br2} $e \in E$ is red if and only if $e \notin \cospn F^*(e)$.
	\end{enumerate}
	In particular, we can test in polynomial time if any element is blue, red, both or none.
\end{lemma}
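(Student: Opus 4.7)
The plan is to reduce both characterizations to the classical matroid fact that, for a fixed weighting $w$, an element $e$ lies in some minimum $w$-basis if and only if $e \notin \spn\{f\in E-e : w_f < w_e\}$, and dually, $e$ lies outside some minimum $w$-basis if and only if $e\notin \cospn\{f\in E-e : w_f > w_e\}$. These are restatements of the fact that an element is excluded from every basis exactly when it is the unique heaviest element of some circuit, and included in every basis exactly when it is the unique lightest element of some cocircuit.

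For (\ref{br1}), I would establish the equivalence ``$e$ is not blue iff $e\in \spn F(e)$'' by working with the existence of a ``bad'' circuit. The easy direction is as follows: if $e \notin \spn F(e)$, then for any realization $w$, the set $\{f : w_f < w_e\}$ is contained in $F(e)$ (since $w_f\in \A(f)$ and $w_e\in \A(e)$ imply $L_f \le w_f < w_e \le U_e$), so $e\notin \spn\{f : w_f<w_e\}$ and hence $e$ belongs to some $w$-basis. For the converse, given $e\in\spn F(e)$, pick a circuit $C\ni e$ with $C-e\subseteq F(e)$ and construct a realization which makes $e$ the unique heaviest element of $C$. The key point, and the main technical step, is that because $L_f < U_e$ for every $f\in C-e$, we can set $M := \max_{f\in C-e} L_f < U_e$ and choose $w_e\in \A(e)$ strictly greater than $M$; then for each $f\in C-e$, since $\inf \A(f) = L_f \le M < w_e$, we may pick $w_f\in \A(f)$ with $w_f<w_e$. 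Extend arbitrarily on the remaining elements to obtain a realization witnessing that $e$ is in no $w$-basis.

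For (\ref{br2}), the argument is strictly dual: $e$ is red iff for every realization there is a $w$-basis avoiding $e$, which by the dual classical fact is equivalent to $e \notin \cospn\{f : w_f > w_e\}$. Repeating the same two-way argument with the roles of infima and suprema swapped (i.e., arranging $e$ to be the unique lightest element of some cocircuit using elements $f$ with $U_f > L_e$) gives the characterization in terms of $F^*(e)$ and $\cospn$.

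Finally, for the algorithmic postscript, both $F(e)$ and $F^*(e)$ are computed by the assumed $O(1)$-time comparisons $L_f < U_e$ (resp.\ $L_e < U_f$), and $\spn$ and $\cospn$ of a set are standardly computable with a polynomial number of independence-oracle calls; so the two tests can be executed in polynomial time per element. The only delicate point in the whole argument is the realization-building step above, where one must exploit that $L_f < U_e$ (strictly) for every $f\in C-e$ to produce the required strict inequalities $w_f < w_e$ even when the infima and suprema are not attained.
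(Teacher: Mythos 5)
Your proposal is correct and follows essentially the same route as the paper: both directions hinge on the per-realization facts about unique heaviest elements of circuits (resp.\ lightest of cocircuits) together with a realization built from the strict inequalities $L_f < U_e$, exactly as in the paper's proof. The only cosmetic difference is that you invoke the classical characterization ``$e$ lies in some minimum $w$-basis iff $e\notin\spn\{f: w_f<w_e\}$'' and perturb weights only on the chosen circuit, whereas the paper re-derives that fact inline via a fundamental-circuit exchange and uses a single global extreme realization; both handle the non-attained infima/suprema correctly.
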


\begin{remark}
    Proofs involving blue and red elements follow the same main ideas. Most of the time we  only consider the blue case, since the red case follows by \emph{duality arguments}. More precisely, consider the  dual matroid $\M^*$ with inverted  uncertainty area  function $-\A(e)=\{-x \colon x\in \A(e)\}$. Since the bases of $\M^*$ are complements of the bases of $\M$ we also get that the uniformly optimal bases of $(\M,\A)$ are the complements of those in $(\M^*,-\A)$, that the red elements in $(\M,\A)$ are the blue elements of $(\M^*,-\A)$ and vice versa. 
\end{remark}

\begin{proof}[Proof of Lemma \ref{blueAndRedCharacterization}]

	We only prove \eqref{br1}, since \eqref{br2} follows by duality arguments using that $F^*(e)$ is the analogue of $F(e)$ for $(\M,-\A)$.
	
	Let $e$ be a blue element and let $K= \min\limits_{f \in F(e)} (U_e - L_f)>0$. For each $f\in E-e$  choose $\varepsilon_f\in [0,K/2)$ such that $L_f +\varepsilon_f \in \A(f)$. In a similar way, choose $\varepsilon_e\in [0,K/2) $ such that $U_e-\varepsilon_e \in \A(e)$. Consider the  realization $w \in \R(E,\A)$ given by $
	w_f = \begin{cases}   U_e - \varepsilon_e  & \text{ if $f=e$,} \\
	L_f + \varepsilon_f  & \text{ if $f\neq e$.}
	\end{cases}$
	
	By construction, for every $f \in F(e)$, $w_f<w_e$. %Indeed:
	%\[
	%w_f = L_f + \varepsilon_f < L_f + \frac{K}{2} \leq L_f + \frac{U_e-L_f}{2}= U_e - \frac{U_e-L_f}{2} \leq U_e - \frac{K}{2} < U_e -\varepsilon_e = w_e.
	%\]
	Suppose now that $e \in \spn F(e)$. Then, there exists a circuit $C \subseteq F(e)+e$ such that $e$ is the unique heaviest (with respect to $w$) element of $C$. This implies that $e$ is outside every $w$-basis, which contradicts that $e$ was blue.
	
	Now let $e$ be an element outside $\spn F(e)$ and suppose that $e$ is not blue. Then, there exists a realization $w \in \R(E,\A)$ such that $e$ is not in any $w$-basis. Choose $T$ to be any $w$-basis and let $C$ be the fundamental circuit\footnote{Recall that if $T$ is a basis and $e$ an element, the fundamental circuit $C$ of $T+e$ is the only circuit of $T+e$. For any $f\in C$, $T-f+e$ is a basis.} of $T+e$. As $e \notin \spn F(e) $ we have that $C-e \not\subseteq F(e)$. Select any $f\in (C-e)\setminus F(e)$, then:
	%\[
	$w_f \geq L_f \geq U_e \geq w_e$. 
	%\]
	We conclude that $T-f+e$ is a $w$-basis that contains $e$, which is a contradiction.
\end{proof}

Even though certain elements can be blue and red at the same time (e.g., in a circuit in which every element has the same weight every element has both colors), this is not possible for uncertain ones as shown by the next lemma.

\begin{lemma}\label{redBlueCertain}
    If $e$ is red and blue, then $e$ is certain.
\end{lemma}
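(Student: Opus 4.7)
The plan is to prove the contrapositive: if $e$ is uncertain (so $L_e<U_e$), then $e$ cannot be simultaneously red and blue. The approach uses the characterizations of red and blue from Lemma \ref{blueAndRedCharacterization} together with the span/cospan identity recalled in the preliminaries.

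The key ingredient is a simple set-theoretic containment between the complements of $F(e)$ and $F^*(e)$. Let $G(e):=(E-e)\setminus F(e)=\{f\in E-e:L_f\geq U_e\}$. For any $f\in G(e)$ one has $U_f\geq L_f\geq U_e>L_e$, so $f\in F^*(e)$; hence $G(e)\subseteq F^*(e)$. This is the only place where the assumption $L_e<U_e$ is used.

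The remainder is a manipulation of the closure operators. Applying the identity $\cospn(X)=X\cup\{e\in E:e\notin\spn((E-e)\setminus X)\}$ with $X=G(e)$, and using that $(E-e)\setminus G(e)=F(e)$ and $e\notin G(e)$, one sees that $e\in\cospn G(e)$ if and only if $e\notin\spn F(e)$. By Lemma \ref{blueAndRedCharacterization}(i), this last condition is exactly the definition of $e$ being blue. Thus, if $e$ is blue, then $e\in\cospn G(e)\subseteq\cospn F^*(e)$, where the inclusion uses monotonicity of $\cospn$ together with $G(e)\subseteq F^*(e)$. By Lemma \ref{blueAndRedCharacterization}(ii), this prevents $e$ from being red, producing the desired contradiction.

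The only subtle point is correctly linking the span/cospan identity with the containment $G(e)\subseteq F^*(e)$; after that, both halves of the color characterization collapse into a single monotonicity statement. As an alternative, one could argue symmetrically via the duality remark preceding the proof of Lemma \ref{blueAndRedCharacterization}: the analogous inclusion $G^*(e)\subseteq F(e)$ lets one translate the "red" hypothesis on $(\M,\A)$ into a "blue" hypothesis on $(\M^*,-\A)$ and apply part (i) of the lemma twice, leading to the same contradiction.
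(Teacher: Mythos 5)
Your proof is correct, and it takes a genuinely different route from the paper. The paper argues directly with realizations: assuming $e$ is uncertain, it picks $w$ with $w_e<U_e$, uses redness to obtain a $w$-basis $T$ avoiding $e$, takes the fundamental circuit $C$ of $T+e$, and uses the blue characterization to find $f\in(C-e)\setminus F(e)$ with $w_f\geq L_f\geq U_e>w_e$, so the exchange $T-f+e$ beats $T$ --- a contradiction. You instead stay entirely at the level of the closure operators: the uncertainty $L_e<U_e$ gives the containment $(E-e)\setminus F(e)\subseteq F^*(e)$, the identity $\cospn(X)=X\cup\{g\in E:g\notin\spn[(E-g)\setminus X]\}$ turns ``$e\notin\spn F(e)$'' into ``$e\in\cospn\bigl((E-e)\setminus F(e)\bigr)$'', and monotonicity of $\cospn$ then places $e$ in $\cospn F^*(e)$, contradicting the red characterization. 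Your argument leans on both iff directions of Lemma \ref{blueAndRedCharacterization} (which is proved earlier and independently, so there is no circularity), while the paper's proof needs only the definition of red plus the blue characterization; in exchange, yours avoids constructing any realization or basis exchange and exposes the purely set-theoretic reason the two characterizations are incompatible for uncertain elements, namely that $F(e)\cup F^*(e)=E-e$ whenever $L_e<U_e$. Both proofs are short; yours is arguably the more structural one, the paper's the more self-contained one.
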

\begin{proof}
Suppose for the sake of contradiction that $e$ is uncertain and pick $w \in \R(E,\A)$ such that $w_e < U_e$. As $e$ is red there exists some $w$-basis $T$ without $e$. Let $C$ be the fundamental circuit of $T+e$. Since $e$ is blue, there exists $f\in (C-e)\setminus F(e)$. Then $w_f \geq L_f \geq U_e > w_e,$ implying that $w(T-f+e)<w(T)$ which contradicts the fact that $T$ was a $w$-basis.
\end{proof}

The definition of blue and red will be useful even if $e$ is certain, or if the uncertainty matroid has no uniformly optimal basis. The following lemma highlights the utility of this definition.

\begin{lemma}\label{blueAndRedUncertain}
Let $(\M,\A)$ be an uncertainty matroid such that an $\A$-basis exists.
\begin{romanenumerate}	
\item If $e$ is blue (resp.\ red)  then there exists an $\A$-basis $T$ such that $e\in T$ (resp. $e\not\in T$).
\item Let $e$ be an uncertain element. Then $e$ is blue (resp. red) if and only if $e$ is inside (resp. outside) every $\A$-basis.
\end{romanenumerate}
\end{lemma}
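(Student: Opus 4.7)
The plan is to prove (i) directly by an exchange argument and then deduce (ii) by combining (i) with Lemma \ref{lem:uob-matroid}.

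For (i), I would focus on the blue case. By Lemma \ref{blueAndRedCharacterization}, $e\notin \spn F(e)$. I would fix any $\A$-basis $T_0$, which exists by hypothesis; if $e\in T_0$ we are done, so assume $e\notin T_0$ and let $C$ be the fundamental circuit of $T_0+e$ in $\M$. Since $C-e$ witnesses $e\in \spn(C-e)$, the assumption $e\notin \spn F(e)$ forces $(C-e)\not\subseteq F(e)$, so I can pick $f\in (C-e)\setminus F(e)$; by definition of $F(e)$ this satisfies $L_f\geq U_e$. The candidate exchange $T:=T_0-f+e$ is a basis of $\M$ by the fundamental circuit property, and I would verify it is an $\A$-basis by a direct weight comparison: for every realization $w\in \R(E,\A)$, $w_e\leq U_e\leq L_f\leq w_f$, so $w(T)\leq w(T_0)$, and since $T_0$ is a $w$-basis we get $w(T)=w(T_0)$, so $T$ is a $w$-basis as well. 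The red case follows by applying this same argument to the dual uncertainty matroid $(\M^*,-\A)$, whose blue elements are the red elements of $(\M,\A)$ and whose $\A$-bases are complements of the $\A$-bases of $(\M,\A)$.

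For (ii), I would combine (i) with Lemma \ref{lem:uob-matroid}. In the blue direction: if $e$ is uncertain and blue, then (i) yields an $\A$-basis $T$ containing $e$; were there another $\A$-basis $T'$ with $e\notin T'$, then $e\in T\setminus T'$ would force $e$ to be certain by Lemma \ref{lem:uob-matroid}(i), contradicting that $e$ is uncertain. Hence $e$ lies in every $\A$-basis. Conversely, if $e$ lies in every $\A$-basis and $w$ is an arbitrary realization, then any fixed $\A$-basis $T$ is, by definition, a $w$-basis containing $e$, so $e$ is blue by Definition \ref{blueAndRed}. The red case follows again by the same duality.

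The step I expect to require the most care is the exchange in (i): we need the single set $T_0-f+e$ to be a minimum-weight basis simultaneously for every realization $w$. The key is the chain $w_e\leq U_e\leq L_f\leq w_f$, which combined with the $w$-optimality of $T_0$ forces $w_e=w_f$ and hence $w(T_0-f+e)=w(T_0)$ uniformly in $w$. The remaining parts are routine: Lemma \ref{lem:uob-matroid} is what upgrades ``$e$ is in some $\A$-basis'' to ``$e$ is in every $\A$-basis'' when $e$ is uncertain, and the duality remark after Lemma \ref{blueAndRedCharacterization} halves the case analysis.
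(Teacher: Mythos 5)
Your proposal is correct and follows essentially the same route as the paper: for (i) the same fundamental-circuit exchange $T_0-f+e$ with $f\in(C-e)\setminus F(e)$ and the chain $w_e\leq U_e\leq L_f\leq w_f$ (the paper phrases it as a contradiction, you verify optimality directly, which is equivalent), and for (ii) the same combination of (i) with Lemma \ref{lem:uob-matroid} plus the trivial converse, with the red cases handled by the same duality argument.
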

\begin{proof}
We only prove the blue case for both items, as the red one follows by duality arguments. For (i), let $S$ be any $\A$-basis. If $e\not\in S$ then consider the fundamental circuit $C$ of $S+e$. Since $e$ is blue we have that $C-e\not\subseteq F(e)$. Let $f\in (C-e)\setminus F(e)$ and let $T=S-f+e$. We claim that $T$ is an $\A$-basis containing $e$,  otherwise there exists a realization $w\in \R(E,\A)$ such that $w(S)<w(T)=w(S)-w_f+w_e$, but then $L_f\leq w_f<w_e\leq U_e$, implying that $f\in F(e)$ which is a contradiction.
Now consider (ii). Note that if $e$ is blue then by (i), it is contained in some $\A$-basis. But since $e$ is uncertain we have, by Lemma  \ref{lem:uob-matroid}, that $e$ is in every $\A$-basis. For the converse, let $T$ be any $\A$-basis. Since $T$ is a $w$-basis for every realization $w$ then by definition, all its elements are blue.
\end{proof}
The previous lemma shows that if the set of uniformly optimal bases is nonempty then all uncertain elements are colored either red or blue. Furthermore the uncertain elements contained in every $\A$-basis are exactly the blue uncertain elements. The next theorem, which is the main result of this section, shows that a converse also holds.

\begin{theorem}\label{existanceCharacterization}
 An $\A$-basis exists if and only if every uncertain element is colored.
\end{theorem}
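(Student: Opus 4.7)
My plan is to handle the two directions separately.

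The forward direction is a short corollary of earlier work. By Lemma \ref{lem:uob-matroid}(i), no uncertain element lies in the symmetric difference of any two $\A$-bases, so every uncertain element $e$ either lies in every $\A$-basis or in none. Lemma \ref{blueAndRedUncertain}(ii) then says $e$ is blue in the former case and red in the latter, so $e$ is colored.

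For the backward direction I would use induction on $|E|$. The case $E=\emptyset$ and the case in which every element is certain (where a minimum $w$-basis is an $\A$-basis) are immediate. Otherwise pick an uncertain element $e$; by hypothesis it is colored, and by Lemma \ref{redBlueCertain} it is either blue or red but not both. Assume $e$ is blue (the red case is dual, with contraction replaced by deletion). Let $(\M',\A')=(\M/e,\A|_{E-e})$ and aim to establish
\begin{equation*}
\text{every uncertain element of }(\M',\A')\text{ is colored,}\tag{$\star$}
\end{equation*}
so that induction produces an $\A'$-basis $T'$ of $\M'$. I would then verify that $T=T'+e$ is an $\A$-basis of $\M$: for any realization $w$, the blueness of $e$ yields a minimum $w$-basis $S$ containing $e$; then $S-e$ is a basis of $\M/e$, minimality of $T'$ with respect to $w|_{E-e}$ gives $w(T')\leq w(S-e)$, and hence $w(T)\leq w(S)$.

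The main obstacle is proving $(\star)$, which I would attack via the characterization from Lemma \ref{blueAndRedCharacterization} together with the standard identities $\spn_{\M/e}(X)=\spn_\M(X\cup\{e\})\setminus\{e\}$ and $\cospn_{\M/e}(X)=\cospn_\M(X)\setminus\{e\}$, valid for $X\subseteq E-e$. Fix an uncertain $f\neq e$. If $f$ is red in $(\M,\A)$, the second identity and the inclusion $F^*_{\M'}(f)\subseteq F^*(f)$ immediately give $f\notin\cospn_{\M/e}(F^*_{\M'}(f))$. If $f$ is blue in $(\M,\A)$, the subcase $e\in F(f)$ follows directly from the first identity and $f\notin\spn_\M F(f)$. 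The delicate subcase is $e\notin F(f)$, i.e., $U_f\leq L_e$, where I need $f\notin\spn_\M(F(f)\cup\{e\})$. If this failed, the blueness of $f$ would force a circuit $C$ with $f\in C\subseteq F(f)\cup\{f,e\}$ that must contain $e$; every $g\in C-e$ would then satisfy $L_g\leq U_f\leq L_e<U_e$, where the strict inequality uses that $e$ is uncertain, so $g\in F(e)$. Thus $C\subseteq F(e)\cup\{e\}$ would be a circuit containing $e$, placing $e\in\spn_\M F(e)$ and contradicting the blueness of $e$.
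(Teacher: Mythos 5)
Your proof is correct, and while it follows the paper's overall skeleton (forward direction from Lemma~\ref{lem:uob-matroid} and Lemma~\ref{blueAndRedUncertain}; converse by induction, contracting a blue uncertain element or deleting a red one), it replaces the paper's two supporting lemmas by direct arguments. The paper delegates the two key steps to Lemma~\ref{colorPreservation} (colors are preserved under contraction of a blue element / deletion of a red one), proved there by constructing explicit realizations $w$, $w^*$ and performing basis exchanges, and to Lemma~\ref{blueAndRedSafety} (an $(\M/e,\A|_{E-e})$-basis plus $e$ is an $(\M,\A)$-basis). You instead prove the color-preservation claim $(\star)$ purely through the span/cospan characterization of Lemma~\ref{blueAndRedCharacterization} together with the minor identities $\spn_{\M/e}(X)=\spn_\M(X+e)-e$ and $\cospn_{\M/e}(X)=\cospn_\M(X)-e$; this makes the easy halves (redness survives contraction of \emph{any} element, since cospan only shrinks) come for free, and isolates exactly where blueness and uncertainty of $e$ are needed, namely your ``delicate subcase'' $U_f\leq L_e$, which you resolve correctly by the chain $L_g\leq U_f\leq L_e<U_e$ forcing a circuit inside $F(e)+e$. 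Similarly, your verification that $T'+e$ is an $\A$-basis is an inline re-proof of the needed direction of Lemma~\ref{blueAndRedSafety} (it implicitly uses that a blue element is not a loop, which is immediate since a loop lies in $\spn F(e)$). What your route buys is a shorter, self-contained and more ``syntactic'' closure-operator argument; what the paper's route buys is lemmas stated in greater generality (color preservation for all colored $f$, and the biconditional in Lemma~\ref{blueAndRedSafety}) that are reused later, e.g.\ in Theorem~\ref{UniformBasesCharacterization} and Lemma~\ref{dontRevealColored}.
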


In order to prove this theorem we need to prove three simple lemmas. Basically, they show that contracting blue elements and/or deleting red elements preserves structure and colors. 

\begin{lemma}\label{blueAndRedSafety} Let $e\in E$ be a colored element (certain or uncertain).
\begin{romanenumerate}
\item If $e$ is blue:\quad 
		$T$ is an $(\M/e,\A|_{E-e})$-basis if and only if $T+e$ is an $(\M,\A)$-basis.
\item If $e$ is red:\quad $T$ is an $(\M\setminus e,\A|_{E-e})$-basis if and only if $T$ is an $(\M,\A)$-basis.
\end{romanenumerate}
\end{lemma}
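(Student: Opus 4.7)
The plan is to reduce (ii) to (i) via the duality remark: an element $e$ is red in $(\M,\A)$ iff it is blue in $(\M^*,-\A)$, contraction and deletion swap under matroid duality ($\M^*/e = (\M\setminus e)^*$), and uniformly optimal bases of $(\M,\A)$ and $(\M^*,-\A)$ are set-complements of each other. So it suffices to prove (i). For (i), I would exploit the standard bijection between bases of $\M/e$ and bases of $\M$ containing $e$, given by $T\leftrightarrow T+e$, and combine it with the blueness of $e$.

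For the forward direction of (i), suppose $T$ is an $(\M/e,\A|_{E-e})$-basis and fix an arbitrary realization $w\in\R(E,\A)$. The restriction $w|_{E-e}$ is a realization of $\A|_{E-e}$, so by hypothesis $T$ has minimum weight among bases of $\M/e$ under $w|_{E-e}$; equivalently, $T+e$ has minimum weight among bases of $\M$ that contain $e$. Since $e$ is blue in $(\M,\A)$, by definition there exists a $w$-basis $B$ with $e\in B$. Then $w(T+e)\leq w(B)$ (as $T+e$ is minimum among bases containing $e$) and $w(B)\leq w(T+e)$ (as $B$ is a $w$-basis), so $T+e$ is itself a $w$-basis. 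Since $w$ was arbitrary, $T+e$ is an $(\M,\A)$-basis.

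For the backward direction of (i), suppose $T+e$ is an $(\M,\A)$-basis. Given any realization $w'\in\R(E-e,\A|_{E-e})$, extend $w'$ to a realization $w\in\R(E,\A)$ by picking any value $w_e\in\A(e)$. For every basis $T'$ of $\M/e$, the set $T'+e$ is a basis of $\M$, hence $w'(T')+w_e = w(T'+e)\geq w(T+e) = w'(T)+w_e$, giving $w'(T)\leq w'(T')$. Thus $T$ is a minimum weight basis of $\M/e$ under $w'$, which proves it is an $(\M/e,\A|_{E-e})$-basis.

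The main conceptual point — rather than a real obstacle — is to pinpoint where blueness of $e$ actually enters the argument: only in the forward direction, to ensure that the minimum weight attained over bases containing $e$ coincides with the overall minimum weight of a basis of $\M$. Every other step is immediate from the bijection between bases of $\M/e$ and bases of $\M$ containing $e$, and from the fact that any realization of $\A|_{E-e}$ can be extended to a realization of $\A$ by freely choosing a value in $\A(e)$. The red case of (ii) then transfers verbatim by the duality translation.
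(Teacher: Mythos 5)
Your proof is correct and takes essentially the same approach as the paper: the blue case rests on the bijection $T\leftrightarrow T+e$ between bases of $\M/e$ and bases of $\M$ containing $e$ (blueness of $e$, which in particular guarantees $e$ is not a loop, is used exactly where you say — to equate the minimum over bases containing $e$ with the global minimum), the converse follows by extending realizations of $\A|_{E-e}$ to $\A$, and the red case transfers by the duality remark. The only difference is cosmetic: the paper phrases the forward direction as a proof by contradiction, while you argue it directly.
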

\begin{proof}
We give proof only for the case when $e$ is blue; as the other case follows by duality arguments. For the direct implication observe that $T$ is a basis of $\M/e$ and $e$ is not a loop of $\M$, therefore $T+e$ is a basis of $\M$. Let us suppose that $T+e$ is not an $(\M,\A)$-basis, then there exists a basis $T'$ and a realization $w \in \R(E,\A)$ such that $w(T')<w(T+e)$. As $e$ is blue in $(\M,\A)$ there exists $T''$ an $(\M,w)-$basis such that $e \in T '' $, so $w(T'')=w(T')$. Notice that $T''-e$ is a basis of $\M/e$. Since $w(T''-e)<w(T)$ this contradicts the fact that $T$ is $(\M/e, \A|_{A-e})$-basis.
	
	For the converse, suppose that $T$ is not an $(\M/e,\A|_{A-e})$-basis. Then we have $T'$ a basis of $\M/e$ and a realization $w \in \R(E-e, \A|_{A-e})$ such that $w(T')<w(T)$. Extending $w$ to $\hat w \in \R(E,\A)$ by selecting any $\hat w_e \in \A(e)$ we have that $\hat w (T'+e)< \hat w(T+e)$. Since $T'+e$ is a basis of $\M$ we have contradicted that $T$ is an $(\M,\A)$-basis. 
\end{proof}

The second lemma we need allows us to simplify the uncertainty areas for coloring purposes.

\begin{definition}
	Let $(\M,\A)$ be an uncertainty matroid. We define the closure of $\A$ as:
	$$
	\cl \A = \{ [L_e,U_e] : e\in E   \}
	$$
\end{definition}
\begin{lemma}
	\label{closureLemma}
	Let $(\M,\A)$ be an uncertainty matroid. If $\B$ is an uncertainty area function  such that $L_e^\A=L_e^\B$ and $U_e^\A=U_e^\B$  for each $e\in E$, then $(\M,\A)$ and $(\M,\B)$ have the same colors. In particular $(\M,\A)$ and $(\M,\cl \A)$ have the same colors.
\end{lemma}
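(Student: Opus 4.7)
The plan is to reduce the claim to a direct application of Lemma~\ref{blueAndRedCharacterization}, since that lemma characterizes blue and red elements using only the quantities $L_e$ and $U_e$, not the finer structure of the sets $\A(e)$.

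More concretely, I would first observe that the sets $F(e)=\{f\in E-e : L_f<U_e\}$ and $F^*(e)=\{f\in E-e : L_e<U_f\}$ used in Lemma~\ref{blueAndRedCharacterization} depend only on the values $L_f, U_f$ for $f\in E$. Under the hypothesis $L_e^{\A}=L_e^{\B}$ and $U_e^{\A}=U_e^{\B}$ for all $e\in E$, the sets $F(e)$ and $F^*(e)$ computed with respect to $\A$ coincide with those computed with respect to $\B$. Since the span and cospan are functions of the matroid $\M$ alone, the conditions ``$e\notin \spn F(e)$'' and ``$e\notin \cospn F^*(e)$'' hold in $(\M,\A)$ if and only if they hold in $(\M,\B)$. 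By Lemma~\ref{blueAndRedCharacterization}, $e$ is blue (resp.\ red) in $(\M,\A)$ if and only if it is blue (resp.\ red) in $(\M,\B)$.

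For the ``in particular'' part, I would note that by definition $\cl\A(e)=[L_e,U_e]$, so the infimum and supremum of $\cl\A(e)$ are $L_e$ and $U_e$, matching those of $\A(e)$. Hence $\cl\A$ satisfies the hypothesis of the lemma with respect to $\A$, and the previous paragraph applies.

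I don't expect any real obstacle here: the entire content of the lemma is that the combinatorial test from Lemma~\ref{blueAndRedCharacterization} depends only on the endpoints $L_e, U_e$. The only subtlety worth double checking is that Lemma~\ref{blueAndRedCharacterization} was stated and proved for arbitrary uncertainty area functions (in particular without assuming that the areas are intervals), so it applies verbatim to both $(\M,\A)$ and $(\M,\B)$; this is indeed the case in the excerpt, where the proof of Lemma~\ref{blueAndRedCharacterization} only uses the values $L_e$ and $U_e$ together with the existence of elements of $\A(e)$ arbitrarily close to them.
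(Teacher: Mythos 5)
Your proof is correct and follows essentially the same route as the paper: both arguments observe that $F(e)$ and $F^*(e)$ depend only on the values $L_e,U_e$, hence coincide for $\A$ and $\B$, and then invoke Lemma~\ref{blueAndRedCharacterization}. Your extra check that Lemma~\ref{blueAndRedCharacterization} holds for arbitrary (non-interval) areas is a sensible sanity check but adds nothing beyond the paper's one-line proof.
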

\begin{proof}
	Note that $F^\A (e) = F^{\B} (e)$ and ${F^*}^\A (e) = {F^*}^{\B} (e)$. By Lemma \ref{blueAndRedCharacterization} we have that $(\M,\A)$ and $(\M,\B)$ have the same colors.
\end{proof}
\begin{lemma}\label{colorPreservation} Let $e\in E$ be a colored uncertain element, and $f\in E- e$.
	\begin{romanenumerate}
		\item Suppose $e$ is blue. If $f$ is blue (resp.\ red) in $(\M,\A)$, then $f$ is blue (resp.\ red) in $(\M/ e, \A|_{E-e})$.
		\item Suppose $e$ is red. If $f$ is blue (resp. red) in $(\M,\A)$, then $f$ is blue (resp. red) in $(\M \setminus e, \A|_{E-e})$.
	\end{romanenumerate}
\end{lemma}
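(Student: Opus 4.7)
My plan is to rely on the span / cospan characterization of colors from Lemma \ref{blueAndRedCharacterization} and to reduce (ii) to (i) by the usual duality argument: passing to $(\M^*, -\A)$ swaps contraction with deletion and the two colors, so (ii) is exactly (i) applied to the dual matroid. Hence I focus on (i), in which $e$ is a colored uncertain element that is blue. Since $e$ is blue it lies in some basis of every realization and therefore is not a loop of $\M$, so the contraction identities for span and cospan can be used without caveats.

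For the blue-preservation half of (i), set $F(f)=\{g\in E-f : L_g<U_f\}$; the goal is $f\notin \spn_{\M/e}(F(f)\setminus\{e\})$. The identity $\spn_{\M/e}(X)=\spn_\M(X+e)\setminus\{e\}$, valid for any $X\subseteq E-e$ when $e$ is not a loop, rewrites the goal as $f\notin \spn_\M(F(f)\cup\{e\})$. If $e\in F(f)$ this is exactly the hypothesis that $f$ is blue in $(\M,\A)$. Otherwise $L_e\geq U_f$; assuming for contradiction that $f\in \spn_\M(F(f)+e)$, the Mac Lane–Steinitz exchange property together with $f\notin \spn_\M F(f)$ gives $e\in \spn_\M(F(f)+f)$. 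I then verify $F(f)+f\subseteq F(e)$: for every $g\in F(f)$ one has $L_g<U_f\leq L_e<U_e$, where the last strict inequality uses that $e$ is uncertain, and similarly $L_f\leq U_f\leq L_e<U_e$ places $f$ itself in $F(e)$. Monotonicity of span yields $e\in \spn_\M F(e)$, contradicting that $e$ is blue.

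For the red-preservation half of (i), set $F^*(f)=\{g\in E-f : L_f<U_g\}$; the goal is $f\notin \cospn_{\M/e}(F^*(f)\setminus\{e\})$. Using $(\M/e)^*=\M^*\setminus e$, one checks that $\cospn_{\M/e}(X)=\cospn_\M(X)\cap(E-e)$ for every $X\subseteq E-e$, so the claim reduces to $f\notin \cospn_\M(F^*(f)\setminus\{e\})$, which is immediate by monotonicity of $\cospn_\M$ from the hypothesis $f\notin \cospn_\M F^*(f)$. The only genuine obstacle in the whole argument is therefore the blue–blue subcase with $e\notin F(f)$: the hypotheses on $f$ alone do not rule out $f\in \spn_\M(F(f)+e)$, and one must transfer the obstruction from $f$ to $e$ via matroid exchange, exploiting the strict inequality $L_e<U_e$ coming from the uncertainty of $e$, in order to reach a contradiction with the blueness of $e$. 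All remaining cases either follow by monotonicity or by the duality reduction already invoked.
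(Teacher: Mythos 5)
Your proof is correct, but it takes a genuinely different route from the paper. The paper argues semantically: it passes to $\cl\A$ (Lemma~\ref{closureLemma}), builds the extreme realizations $w$ and $w^*$ (putting $f$ at its supremum resp.\ infimum and all other elements at the opposite extreme), extracts optimal bases containing or avoiding $f$, shows such a basis can be assumed to contain $e$ (in the red subcase this needs an extra exchange step to swap $e$ into the basis), and then transfers to $\M/e$ via the basis correspondence of Lemma~\ref{blueAndRedSafety} before reading off the color from Lemma~\ref{blueAndRedCharacterization}. You instead stay entirely at the level of the span/cospan characterization and closure axioms: the identities $\spn_{\M/e}(X)=\spn_\M(X+e)\setminus\{e\}$ and $\cospn_{\M/e}(X)=\cospn_\M(X)\cap(E-e)$ (the latter via $(\M/e)^*=\M^*\setminus e$), monotonicity, and one application of Mac Lane--Steinitz exchange in the only nontrivial subcase ($f$ blue, $e\notin F(f)$), where the inclusion $F(f)+f\subseteq F(e)$ — using $U_f\le L_e<U_e$, i.e.\ exactly the uncertainty of $e$ — transfers the obstruction to $e$ and contradicts $e\notin\spn F(e)$. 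Your verification of $(L_e<U_e)$ being genuinely needed is sound (with $e$ certain the statement can fail, e.g.\ two parallel equal-weight elements), and your duality reduction of (ii) to (i) is the same device the paper uses. What your approach buys: it avoids constructing realizations altogether (so no appeal to $\cl\A$ or to Lemma~\ref{blueAndRedSafety}), it is shorter, and it isolates precisely where the hypotheses on $e$ are used — your argument in fact shows that redness of $f$ is preserved under contraction of an arbitrary element $e\neq f$ by monotonicity of $\cospn$ alone, whereas in the paper that is the most laborious subcase; what the paper's proof buys is that it works directly from the definition of colors via optimal bases and reuses machinery (Lemma~\ref{blueAndRedSafety}) needed elsewhere anyway.
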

\begin{proof}
    We give proof only for the case when $e$ is blue; as the other case follows by arguing dually.  Define the weight function $w:E \to \mathbb R$ given by:
	\[
	w_x = \begin{cases}
		U_f & \text{if $x=f$,}\\
		L_x & \text{if $x\neq f$.}
	\end{cases} \qquad w^*_x = \begin{cases}
	L_f & \text{if $x=f$,}\\
	U_x & \text{if $x\neq f$.}
	\end{cases}
	\]
We begin by considering $f$ blue. Since $f$ is blue in $(\M,\cl \A)$, there exists some $(\M,w)$-basis $T$ such that $f \in T$. We start by proving that $e\in T$, suppose not, then we can consider $C$ the fundamental circuit of $T+e$. As $e$ is blue, we have that $e \notin \spn F(e)$ and $(C-e) \not\subseteq F(e)$. Hence, we can select $g \in C \setminus  F(e)$ such that $g\neq e$. Since $e$ is non-trivial: 
	\[w_g \geq L_g \geq U_e > L_e = w_e.\]  
	It follows that $w(T-g+e)<w(T)$, which contradicts the fact that $T$ is a $w$-basis. By Lemma \ref{blueAndRedSafety}, we have that $T-e$ is a $(\M/e, w|_{E-e})$-basis such that $f \in T-e$. Therefore, $f\notin \spn_{\M/e} \{g \in E-e : w_g < w_f \}=\spn_{\M/e} (F(f)-e)$. Using Lemma \ref{blueAndRedCharacterization} it follows that $f$ is blue in $(\M/e, \A|_{A-e})$. 

	 We now turn to the case $f$ red. As $f$ is red in $(\M,\cl \A)$, there exists some $(\M,w^*)$-basis $T$ such that $f \notin T$. If $e\in T$ then $T-e$ is an $(\M/e,w^*|_{E-e})$-basis by Lemma \ref{blueAndRedSafety}. As $f$ is in a maximum weight cobasis in $\M/e$, we get \[f\notin \cospn_{\M/e} \{g \in E-e : w^*_g > w^*_f \}=\cospn_{\M/e} (F^*(f)-e).\] By Lemma \ref{blueAndRedCharacterization} we conclude that $f$ is red in $(\M,\A)$. Hence, we are only left with the case $e\notin T$. Since $e$ is blue in $(\M,\cl \A)$, there exists some $(\M,w^*)$-basis $T'$ such that $e\in T'$. As $(\M,w^*)$-basis are basis of a matroid (because they are basis of minimum weight with respect to $w^*$) we can find $e ' \in T\setminus T'$  such that $T-e'+e$ is an $(\M,w^*)$-basis, from here we can proceed as before but working with $T-e'+e$ instead of $T$.
\end{proof}

We are now ready to prove this section's main result.
\begin{proof}[Proof of Theorem \ref{existanceCharacterization}]
	We only need to prove the converse and we proceed by induction on the number of uncertain elements $k$.
	If $k=0$ an $\A$-basis is simply a basis of minimum weight, which clearly exists. Suppose that $k>0$, and let $e\in E$ be any uncertain element. If $e$ is blue, we have that every uncertain element of $E-e$ is colored in $(\M/e, \A|_{E-e})$, as colors were preserved. By inductive hypothesis, we have an $(\M/e,\A|_{E-e})$-basis $T$ and by Lemma \ref{blueAndRedSafety} we have that $T+e$ is an $(\M,\A)$-basis. If $e$ is red, one can proceed similarly but deleting instead.
\end{proof}

\begin{remark}
The previous theorem gives an algorithmic way to test if $(\M,\A)$ admits an $\A$-basis: we simply check if every element is colored, using Lemma \ref{blueAndRedCharacterization}. Let us now consider the problem of finding one such base. It is worth noting that algorithms for this task are available for closed interval and open interval uncertainty areas. For closed interval uncertainty areas, one can find an $\A$-basis by using Kaspersky and Zielinsky's approach \cite{KasperskiZ06}  for finding a zero maximal regret basis. For open intervals (or singletons), one can simply run the 2-competitive U-RED algorithm by Erlebach, Hoffmann and Kammer \cite{Erlebach0K16} for the online adaptive variant: if this algorithm does not perform any query, then it  outputs a uniformly optimal basis. Otherwise, the algorithm finds a witness set, i.e., a set for which at least one element must be queried in order to find a solution.
In what follows we  provide a new algorithm that finds $\A$-bases for arbitrary  uncertainty areas (not just intervals).
\end{remark}

If uniformly optimal bases exist, then the elements are partitioned into blue uncertain, red uncertain and certain elements. After contraction of the set $B$ of all blue uncertain elements and deletion of the set $R$ of all red uncertain elements, we are left with a matroid that only has certain elements, we call such weighted matroid the \emph{certain weighted matroid} $(\M^c,w^c)$, where $\M^c=\M/B\setminus R$, and $w^c=\A|_{E\setminus (R\cup B)}$. The following theorem shows that every uniformly optimal basis arises by extending some  optimal basis on the certain weighted matroid.

\begin{theorem}
	\label{UniformBasesCharacterization}
	Let $(\M,\A)$ be an uncertainty matroid for which   an $\A$-basis exists and $(\M^c,w^c)$ its certain weighted matroid. Then $T$ is an $\A$-basis if and only if:
	\begin{romanenumerate}
		\item $T$ contains every blue uncertain element, 
		\item $T$ avoids each red uncertain element, and
		\item The certain elements of $T$ form a minimum weight basis of  $(\M^c,w^c)$.
	\end{romanenumerate}
\end{theorem}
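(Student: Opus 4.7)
\medskip

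The plan is to prove both directions by induction on the number of uncertain elements, using Lemma \ref{blueAndRedSafety} to contract/delete colored uncertain elements one at a time. The setup from Theorem \ref{existanceCharacterization} guarantees that every uncertain element is colored (in fact, as a single color by Lemma \ref{redBlueCertain}), so the inductive step is always available. Let $B$ denote the set of blue uncertain elements and $R$ the set of red uncertain elements of $(\M,\A)$.

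For the forward direction, suppose $T$ is an $\A$-basis. Condition (i) is immediate: every blue uncertain $e$ lies in $T$ by Lemma \ref{blueAndRedUncertain}(ii); analogously every red uncertain element is outside $T$, giving (ii). For (iii), I would pick any uncertain colored element $e$ and apply Lemma \ref{blueAndRedSafety}: if $e$ is blue then $T-e$ is an $(\M/e, \A|_{E-e})$-basis, and if $e$ is red then $T$ is an $(\M\setminus e,\A|_{E-e})$-basis. By Lemma \ref{colorPreservation}, every other uncertain colored element remains colored in the reduced uncertainty matroid. Iterating this over all elements of $B\cup R$ (in any order) eventually leaves the certain weighted matroid $(\M^c,w^c)$, and the remaining set $T \setminus B$ is an $\A|_{E\setminus(B\cup R)}$-basis of $\M^c$. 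Since all elements are now certain, an $\A$-basis coincides with an ordinary minimum weight basis, yielding (iii).

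The converse is the mirror argument. Suppose $T$ satisfies (i), (ii), (iii). Consider $T \setminus B$: by (iii) it is a minimum weight basis of $(\M^c,w^c)$, i.e., an $\A|_{E\setminus(B\cup R)}$-basis. Now reverse the reduction, re-adding the elements of $B\cup R$ one at a time: each such element $e$ is colored in the original $(\M,\A)$ and, by repeatedly applying Lemma \ref{colorPreservation} in reverse, stays colored in whichever intermediate uncertainty matroid we re-insert it into. By Lemma \ref{blueAndRedSafety}, re-contracting a blue $e$ (adding it into the current basis) or re-deleting a red $e$ (leaving the current basis unchanged) preserves the property of being an $\A$-basis in the larger matroid. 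After restoring all of $B\cup R$ we recover $T$ as an $\A$-basis of $(\M,\A)$.

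The only delicate point is checking that during both reductions the hypotheses of Lemma \ref{blueAndRedSafety} and Lemma \ref{colorPreservation} continue to hold: namely that at every stage the element we contract or delete is still colored in the current uncertainty matroid, and that the other uncertain elements keep their colors so conditions (i) and (ii) for the intermediate instance remain meaningful. This is where Lemma \ref{colorPreservation} is used as the bookkeeping tool, letting the induction go through cleanly regardless of the order in which we process the elements of $B\cup R$.
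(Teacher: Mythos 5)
Your proposal is correct and follows essentially the same route as the paper: conditions (i)–(ii) via Lemma \ref{blueAndRedUncertain}, and condition (iii) plus the converse by iteratively contracting blue and deleting red uncertain elements, with Lemma \ref{colorPreservation} preserving colors and Lemma \ref{blueAndRedSafety} transferring the uniformly-optimal-basis property at each step (the paper phrases the converse as an induction on the number of uncertain elements, which is just your ``reverse the reduction'' argument rolled up).
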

\begin{proof}
	$\A$-bases always contain every blue uncertain element and avoid each red uncertain element by Lemma \ref{blueAndRedUncertain}. By Lemma \ref{colorPreservation}, we can delete each red uncertain element while preserving colors. More so, from Lemma \ref{blueAndRedSafety} we conclude that $T$ is uniformly optimal after these deletions. A similar argument allows us to now contract each blue uncertain element while preserving colors in each contraction. We are only left with the certain elements of $T$ and by Lemma \ref{blueAndRedSafety} we conclude they form a minimum weight basis of $(\M^c,w^c)$.
	
	We show the converse by induction on the number of uncertain elements $k$ of $E$.
	If $k=0$, using (iii) we get that $T$ is a minimum weight basis of $(\M^c,w^c)$. Noting that $\M=\M^c$ and $\A=w^c$ we conclude that $T$ is an $(\M,\A)$-basis.
	
	If $k>0$ select any uncertain element $e \in E$. If $e$ is blue, we have from (i) that $e\in T$. As colors are preserved when contracting $e$, it follows by inductive hypothesis that $T-e$ is an $(\M/e,\A|_{E-e})$-basis, and using Lemma \ref{blueAndRedSafety}  we conclude that $T$ is an $(\M,\A)$-basis. If $e$ is red, then $e\notin T$ by (ii). We now proceed as before but deleting $e$ instead.
\end{proof}

Theorems \ref{existanceCharacterization} and \ref{UniformBasesCharacterization} allow for algorithmic implementation. We can decide if an $\A$-basis exists by checking if all elements are colored. If every element is colored, we contract every blue uncertain element, delete each red uncertain element, compute a minimum weight basis of the certain weighted matroid and output the optimal certain basis along with every blue uncertain element.  We summarize this result below and discuss its implementation in more detail in Section \ref{sec:algorithm}.
\begin{corollary}
    There is an algorithm that finds an $\A$-basis  or decides that none exists in polynomial time.
\end{corollary}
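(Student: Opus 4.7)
The plan is to implement, step by step, the characterizations obtained in Theorems~\ref{existanceCharacterization} and \ref{UniformBasesCharacterization}, and to verify that each step only requires a polynomial number of calls to the independence oracle of $\M$ and to the uncertainty areas.

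First I would run a color-detection subroutine on every element $e\in E$. By Lemma~\ref{blueAndRedCharacterization}, to decide whether $e$ is blue it suffices to enumerate $F(e)=\{f\in E-e : L_f<U_e\}$ (using the assumed access to $L_f$ and $U_e$) and then test $e\in\spn F(e)$; analogously for red using $F^*(e)$ and $\cospn$. Both $\spn$ and $\cospn$ can be computed with a polynomial number of independence queries (for the cospan, either work in the dual oracle obtained by complementation, or use the identity mentioned in Section~\ref{sec:Preliminaries}). This produces in polynomial time the sets $B$, $R$ of blue-uncertain and red-uncertain elements, together with the set of uncolored uncertain elements.

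Next, by Theorem~\ref{existanceCharacterization}, if any uncertain element turns out to be uncolored I immediately report that no $\A$-basis exists. Otherwise, I build the certain weighted matroid $(\M^c,w^c)=(\M/B\setminus R,\A|_{E\setminus(R\cup B)})$; its independence oracle is simulated by a single call to the oracle of $\M$, namely $S\subseteq E\setminus(R\cup B)$ is independent in $\M^c$ iff $S\cup B$ is independent in $\M$ (note $B$ is itself independent in $\M$, since by Lemma~\ref{blueAndRedUncertain} every uncertain blue element sits inside every $\A$-basis). On $(\M^c,w^c)$ I run the standard greedy algorithm to compute a minimum weight basis $T^c$, and I output $T=T^c\cup B$.

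Correctness is immediate from Theorem~\ref{UniformBasesCharacterization}: $T$ contains every blue-uncertain element, avoids every red-uncertain one, and its certain part is a minimum weight basis of $(\M^c,w^c)$. The main thing to be careful about is bookkeeping the polynomial-time bound; no single step is an obstacle, but one has to make the oracle simulations for contraction/deletion and for cospan explicit so that the total number of oracle calls and area queries remains polynomial. The remark following Theorem~\ref{UniformBasesCharacterization} already announces that these implementation details will be fleshed out in Section~\ref{sec:algorithm}, so the proof can safely defer them there and conclude.
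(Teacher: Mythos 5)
Your proposal is correct and follows essentially the same route as the paper: test colors via Lemma~\ref{blueAndRedCharacterization}, decide existence via Theorem~\ref{existanceCharacterization}, then contract the blue uncertain elements, delete the red uncertain ones, and extend a minimum weight basis of the certain weighted matroid using Theorem~\ref{UniformBasesCharacterization}, with implementation details deferred to Section~\ref{sec:algorithm}. Your explicit simulation of the oracle for $\M/B\setminus R$ is a reasonable fleshing-out of those details and introduces nothing beyond what the paper intends.
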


We can also use Theorem \ref{UniformBasesCharacterization} to fully characterize the matroid $\mat(\M,\A)$.

\begin{corollary}\label{matroidmat}
If $(\M,\A)$ admits an $\A$-basis, then the matroid $\mat(\M,\A)$ of all $\A$-bases is a sum of minors of $\M$. In particular, if $\M$ belongs to some minor closed class of matroids (e.g., graphic, linear, gammoid) then so does $\mat(\M,\A)$.
\end{corollary}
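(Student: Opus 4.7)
The plan is to use Theorem \ref{UniformBasesCharacterization} to decompose $\mat(\M,\A)$ into a direct sum of three pieces, each exhibited as a minor of $\M$. Denote by $B$ and $R$ the sets of blue uncertain and red uncertain elements, and let $E^c=E\setminus(B\cup R)$. By Theorem \ref{UniformBasesCharacterization}, an $\A$-basis is precisely a set of the form $B\cup T'$ where $T'$ is a minimum weight basis of the certain weighted matroid $(\M^c,w^c)$. Consequently, every element of $B$ is a coloop of $\mat(\M,\A)$, every element of $R$ is a loop, and the restriction to $E^c$ equals $\mat(\M^c,w^c)$, giving
\[\mat(\M,\A)\;=\;(\M|B)\;\oplus\;(\M/(E\setminus R))\;\oplus\;\mat(\M^c,w^c).\]
Here $\M|B$ is the free matroid on $B$, since $B$ is contained in any $\A$-basis (hence in a basis of $\M$) and so is independent in $\M$. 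And $\M/(E\setminus R)$ is the all-loops matroid on $R$, because any $\A$-basis is a basis of $\M$ disjoint from $R$ (Lemma \ref{blueAndRedUncertain}), so $r_\M(E\setminus R)=r_\M(E)$ and the contraction has rank $0$.

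It remains to express $\mat(\M^c,w^c)$ as a direct sum of minors of $\M^c$ (which in turn are minors of $\M$, as $\M^c=\M/B\setminus R$). For this I would invoke the classical greedy description of minimum weight bases: if $w_1<\cdots<w_k$ are the distinct values of $w^c$, $E_i=(w^c)^{-1}(w_i)$ and $F_i=E_1\cup\cdots\cup E_i$, then a basis $T'$ of $\M^c$ is of minimum weight if and only if $T'\cap E_i$ is a basis of the minor $\M_i:=(\M^c/F_{i-1})|E_i$ for every $i$. Since the $E_i$ partition the ground set of $\M^c$ and the ranks add up ($r(\M^c)=\sum_i r(\M_i)$), this yields $\mat(\M^c,w^c)=\bigoplus_{i=1}^k \M_i$.

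Combining both decompositions exhibits $\mat(\M,\A)$ as a direct sum of minors of $\M$, proving the first claim. For the \emph{in particular} part, any minor-closed class $\mathcal C$ containing $\M$ contains each minor summand of the decomposition. Since the classical minor-closed classes (graphic, $\mathbb F$-linear for a fixed field $\mathbb F$, transversal, gammoid) are also closed under direct sums, $\mat(\M,\A)$ belongs to $\mathcal C$ as well.

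The main technical point I anticipate is justifying the greedy decomposition $\mat(\M^c,w^c)=\bigoplus_i\M_i$. The subtle aspect is that $F_{i-1}$ need not be independent in $\M^c$, but this is harmless: any minimum weight basis $T'$ satisfies $|T'\cap F_{i-1}|=r_{\M^c}(F_{i-1})$, so contracting $F_{i-1}$ agrees with contracting any maximal independent subset of it, and the minor $(\M^c/F_{i-1})|E_i$ is well-defined. Once unpacked, matching bases on both sides reduces to a routine exchange-axiom argument, standard in greedoid/matroid theory.
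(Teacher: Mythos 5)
Your proposal is correct and follows essentially the same route as the paper: apply Theorem \ref{UniformBasesCharacterization} and then the standard greedy level decomposition of minimum-weight bases into a direct sum of minors $(\M^c/F_{i-1})|E_i$. The only (cosmetic) difference is packaging: the paper folds the blue uncertain elements into a lowest weight class $E_0=B$ of an auxiliary weight $\bar w$ on $\M\setminus R$, whereas you split off $B$ as coloops ($\M|B$) and $R$ as loops ($\M/(E\setminus R)$) before decomposing the certain part, and you make explicit the (needed, and true for the listed classes) closure under direct sums.
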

\begin{proof}
        Let $B$ and $R$ be the sets of blue uncertain elements and red uncertain elements of $\M$ respectively. 
Consider the set of certain weights $\{w_e\colon e\in E(\M^c)\}$ and order them increasingly $w_1 < w_2 <\dots < w_k$. Let $w_0$ be an arbitrary real number such that $w_0<w_1$.

Consider the weight function $\bar{w}\colon E\setminus R\to \mathbb R$ given by $\bar{w}_e = w_e$ if $e\in E\setminus B$, and $\bar{w}_e=w_0$ if $e\in B$. 
By Theorem \ref{UniformBasesCharacterization}, the bases of $\mat(\M,\A)$ have the form $B\cup X$, where $X$ is a minimum weight base of $(\M^c,w^c)$, where we recall that $\M^c=\M / B \setminus R$ and $w^c=\A|_{E\setminus (R\cup B)}$. Since $B$ is independent on $\M$, it is easy to see that the bases of $\mat(\M,\A)$ coincide with the minimum weight bases of $(\M\setminus R,\bar{w})$.

Define the following sets 
\begin{align*}
E_i &= \{e\in E\setminus R\colon \bar{w}_e=w_i\}, \forall i\in \{0,\dots, k\}\\
F_i&= \bigcup_{j=0}^i E_i.
\end{align*}

Standard matroid arguments (by using the greedy algorithm) show that every minimum weight basis $X$ of $(\M\setminus R,\bar{w})$ can be obtained by selecting for each $i\geq 1$, a basis $X_i$ of $\M / F_{i-1} |E_i$; the unique basis $X_0=B$ of $\M|_{E_0}=\M|_B$; and then taking $X=\bigcup_{i=0}^k X_i$.

It follows from here that $\mat(\M,\A) = \M|_{E_0} \oplus \bigoplus_{i=1}^k\M/F_{i-1}|_{E_i}$ which is a sum of minors of the original matroid.
\end{proof}

\section{Feasible Queries}\label{sec:Feasible}

\begin{definition}
	A set $F\subseteq E$ is a feasible query (or simply feasible) if no matter its revelation it guarantees the existence of a uniformly optimal basis. That is, $\forall \B \in \R(F,\A)$ there exists some $\B$-basis.
\end{definition}

Any superset of a feasible query  is also feasible. Using this, one can show that  feasible sets for a given uncertainty area function are also feasible for any revelation of a subset.

\begin{lemma}\label{feasibilityPreservedUnderRevelation}
    Let $F\subseteq E$ be feasible in $(\M,\A)$ and $X\subseteq E$. If $\B \in \R(X,\A)$, then $F$ is feasible for $(\M,\B)$.
\end{lemma}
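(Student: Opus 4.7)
The plan is to reduce this to the elementary observation (stated in the paragraph preceding the lemma) that every superset of a feasible query is itself feasible. Given any $\B'\in\R(F,\B)$, I would show that $\B'$ is in fact a revelation of $F\cup X$ in the original uncertainty matroid, i.e.\ $\B'\in\R(F\cup X,\A)$. This follows from a short case analysis on each $e\in E$, depending on whether $e$ lies in $F$ only, in $X$ only, in both, or in neither: in each case the ``singleton subset of $\A(e)$'' versus ``equal to $\A(e)$'' condition required by the definition of revelation can be read off directly from the definitions of $\B$ (as a revelation of $X$ in $(\M,\A)$) and of $\B'$ (as a revelation of $F$ in $(\M,\B)$). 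The only point worth noting is that when $e\in F\cap X$, the set $\B(e)$ is already a singleton, so the only singleton subset of $\B(e)$ is $\B(e)$ itself, which is a singleton in $\A(e)$.

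Once this containment is established, feasibility of $F$ in $(\M,\A)$ together with $F\cup X\supseteq F$ gives, by superset-monotonicity, that $F\cup X$ is feasible in $(\M,\A)$. Applying this to the revelation $\B'\in\R(F\cup X,\A)$ produces a $\B'$-basis. Since $\B'$ was an arbitrary element of $\R(F,\B)$, the definition of feasibility in $(\M,\B)$ is satisfied.

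I expect no real obstacle; the argument is essentially definitional. The only identity one needs to internalize is ``revelations of revelations are revelations of unions,'' i.e.\ $\R(F,\B)\subseteq\R(F\cup X,\A)$ whenever $\B\in\R(X,\A)$, which is precisely what the first step formalizes.
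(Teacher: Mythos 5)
Your proof is correct and follows essentially the same route as the paper: observe that $F\cup X$ is feasible in $(\M,\A)$ by superset-monotonicity, and that any $\B'\in\R(F,\B)$ is also a revelation in $\R(F\cup X,\A)$, which immediately yields a $\B'$-basis. The definitional case analysis you sketch (including the $e\in F\cap X$ case) is exactly the content the paper leaves implicit.
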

\begin{proof}
   Since $F$ is feasible in $(\M,\A)$, so is $X\cup F$. Consider any revelation $\B'\in \R(X,\B)$. Since $\B'\in \R(X\cup F,\A)$ and $X\cup F$ is feasible, we conclude that there exists a $\B'$-basis. 
\end{proof}

Since revealing elements that are certain does not yield extra information, we also get that all minimal (for inclusion) queries only contain uncertain elements. A simple, yet strong result is that it never pays off to query a colored element. 

\begin{lemma}\label{dontRevealColored} Let $X\subseteq E$ a feasible query. If $e\in X$ is colored, then $X-e$ is a feasible query. In particular, if $X$ is a feasible query minimal for inclusion then $X$ consists only of uncertain uncolored elements.
\end{lemma}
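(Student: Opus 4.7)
The plan is to prove the first sentence directly and then derive the ``in particular'' as a short corollary. To show $X-e$ is feasible, I will fix an arbitrary revelation $\B\in\R(X-e,\A)$ and exhibit a $\B$-basis of $(\M,\B)$. The idea is: if I extend $\B$ by assigning $e$ any fixed value $b_e\in\A(e)$, I obtain a revelation $\B'\in\R(X,\A)$ of the full set $X$, and feasibility of $X$ then supplies a $\B'$-basis. Since $e$ was colored in $(\M,\A)$, Lemma \ref{colorPreservationRevelation} will guarantee that it remains colored in both $(\M,\B')$ and $(\M,\B)$, which is what will let me control whether this basis contains $e$.

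By the duality remark I may assume $e$ is blue. Applying Lemma \ref{blueAndRedUncertain}(i) inside $(\M,\B')$ then yields a $\B'$-basis $T$ with $e\in T$. Two successive applications of Lemma \ref{blueAndRedSafety}(i) will then do the transport. First, in $(\M,\B')$ the lemma turns ``$T$ is a $\B'$-basis containing $e$'' into ``$T-e$ is an $(\M/e,\B'|_{E-e})$-basis''. Since $\B$ and $\B'$ agree outside $e$, this is the same as asserting that $T-e$ is an $(\M/e,\B|_{E-e})$-basis. Second, applying the same lemma in $(\M,\B)$ (where $e$ is still blue) lifts this back to the statement that $T$ is an $(\M,\B)$-basis, which is exactly the $\B$-basis we needed.

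For the final assertion, minimality of $X$ combined with the first part forbids any colored element in $X$, while the observation preceding the lemma, that revealing a certain element yields no new information, rules out certain elements. Together these leave only uncertain uncolored elements in $X$. I expect no serious obstacle here: the argument is essentially a bookkeeping exercise around Lemma \ref{blueAndRedSafety}, and the only subtlety is to apply it both ``downwards'' in $(\M,\B')$ and ``upwards'' in $(\M,\B)$ with the same $e$ and the same set $T$, which is precisely what Lemma \ref{colorPreservationRevelation} makes possible.
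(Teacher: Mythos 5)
Your proof is correct and follows essentially the same route as the paper's: extend the revelation of $X-e$ by fixing a value for $e$, use feasibility of $X$ together with Lemma \ref{colorPreservationRevelation}, obtain via Lemma \ref{blueAndRedUncertain} a uniformly optimal basis containing $e$ (assuming blue, with red handled dually/by deletion), and transport it with two applications of Lemma \ref{blueAndRedSafety} using that $\B$ and the extended revelation agree on $E-e$. The handling of the ``in particular'' part also matches the paper's (colored elements removable by the first part, certain elements by the remark preceding the lemma).
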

\begin{proof}
    Let $\B\in \R(X-e,\A)$. We need to prove that  an $(\M,\B)$-basis exists. Consider any value $w_e\in \A(e)$ and the revelation $\bar{\B}\in \R(\{e\},\B)$ given by $\bar{\B}(f)=\B(f)$ for $f\in E-e$ and $\bar\B(e)=\{w_e\}$. Since $\bar\B \in \R(X,\A)$ and $X$ is feasible in $(\M,\A)$, we conclude that there exists an $(\M,\bar\B)$-basis. Now we have two cases:
If $e$ is blue in $(\M,\A)$, then, by Lemma \ref{colorPreservationRevelation} it is also blue in $(\M,\B)$ and $(\M,\bar{\B})$. By Lemma \ref{blueAndRedUncertain} the latter admits a uniformly optimal base containing $e$, say $T+e$, hence Lemma \ref{blueAndRedSafety} implies that $T$ is also a  $(\M/e,\bar{\B}|_{E-e})$-basis. But observe that $\B|_{E-e}$ is equal to $\bar{\B}|_{E-e}$. Therefore, $T$ is also an $(\M/e,\B|_{E-e})$-basis, and by Lemma \ref{blueAndRedSafety}, $T+e$ is an $(\M,\B)$-basis. The case when $e$ is red is completely analogous, using deletion instead.  We conclude that $X-e$ is a feasible query. 
\end{proof}

An exciting application of this lemma is that the set of all uncolored uncertain elements is always a feasible query set. To see this,  start with all the uncertain elements (which are a feasible query) and repeatedly remove colored elements while applying Lemma \ref{dontRevealColored}. 

By Theorem \ref{existanceCharacterization}, a set $F$ is feasible only if after its revelation all uncertain elements are colored. So, consider any uncertain element $e$ before any revelation. Intuitively, its color depends on the possible relative position between its real weight $w_e$ and the real weight of elements that could potentially span it (or cospan it). In particular, it is not hard too see that the \emph{color} of $e$ is unaffected if we reveal an element $f$ whose uncertainty area is too low (say $U_f \leq L_e$), because the real value of $f$ will be in every case at most that of $e$. A similar situation happens if the uncertainty area is too high (say $U_e\leq L_f$). A complicated thing occurs if $\A(f)$ intersects $(L_e,U_e)$, because after revealing $f$ we may still don't know the relative positions of $w_e$ and $w_f$ until we reveal $e$. Finally, if $f$ is in none of the previous situation, then after revealing it we will know for sure the relative position of $w_e, w_f$ even without revealing $e$. The previous discussion motivates the following definitions.

\begin{definition}
    For each $e\in E$ define the sets $\lo(e)$, $\mi(e)$, $\hi(e)$ and $\bo(e)$ by:
	\begin{align*}
	\lo(e) & =\{ f \in E-e : U_f \leq  L_e \}, \qquad	\hi(e) =\{ f \in E-e : U_e \leq L_f \} ,\\
	\mi(e) & =\{ f \in E-e : \A (f) \cap (L_e, U_e) \neq  \emptyset \} ,\\
	\bo(e) & =\{ f \in E\setminus \mi(e) -e : \A(f) \cap   (-\infty, L_e] \neq \emptyset \; \land \; \A(f) \cap   [U_e, \infty) \neq \emptyset \}.
	\end{align*}
\end{definition}

Note that that for $e\in E$, $F(e)=(E-e)\setminus \hi(e)$ and $F^*(e)=(E-e)\setminus \lo (e)$. Furthermore if $e$ is uncertain (i.e., $L_e< U_e$) then $E-e$ is partitioned into the sets $\lo(e), \hi(e), \mi(e)$ and $\bo(e)$.

\begin{definition}\label{matroidE}
 For each $e\in E$ denote $\M/\lo(e)\setminus \hi(e) $ by $\M'_e$
\end{definition}

\begin{remark}
	In this section we talk about different revelations simultaneously (for instance, $\A$ and $\B$). We differentiate the objects that arise this way  by using superscript denoting these dependencies. For example $F^ \A(e)$ denotes $F(e)$ with respect to the areas given by $\A$.
\end{remark}
 
The next technical lemma formalizes the idea that the elements that influence the color of an uncertain element $e$ are those in $\mi(e)$ and those in $\bo(e)$ that are not queried.

\begin{lemma}\label{uncoloredUncertainInRevelation}
Let $X \subseteq E$ and $\B\in \R(E\setminus X, \A)$ a revelation of its complement. If $e \in X$ is uncertain and uncolored in $\B$, then there exists a circuit $C$ in $\M'_e$ such that $e \in C$ and $(C-e) \cap [\mi^\A(e) \cup (X\cap \bo^\A(e))]\neq \emptyset$.
\end{lemma}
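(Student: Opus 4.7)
The plan is to build a candidate circuit from the ``not blue'' witness, then eliminate the single pathological possibility using the ``not red'' witness via the circuit--cocircuit intersection property.

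First I would extract structure from the fact that $e$ is not blue in $(\M,\B)$. By Lemma~\ref{blueAndRedCharacterization} there is a circuit $C_1$ of $\M$ through $e$ with $C_1 - e \subseteq F^\B(e)$. A direct check of infima/suprema under $\B$ shows $F^\B(e) \cap \hi^\A(e) = \emptyset$ and $\lo^\A(e) \subseteq F^\B(e)$; in particular $C_1 \cap \hi^\A(e) = \emptyset$. Because $\spn$ commutes with contraction and deletion in the standard way, $e \in \spn_\M(C_1 - e)$ upgrades to $e \in \spn_{\M'_e}\bigl((C_1 \setminus \lo^\A(e)) - e\bigr)$, yielding a circuit $C^*$ of $\M'_e$ through $e$ contained in $C_1 \setminus \lo^\A(e)$. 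Hence $C^* - e \subseteq F^\B(e) \setminus \lo^\A(e)$.

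Next I would decompose $F^\B(e) \setminus \lo^\A(e)$ according to the $\A$-type of each element and, for revealed elements, the value $b_f$. The four pieces are $\mi^\A(e)\cap X$, $\{f \in \mi^\A(e) \setminus X : b_f < U_e^\A\}$, $\bo^\A(e)\cap X$, and a ``bad'' piece $Z := \{f \in \bo^\A(e)\setminus X : b_f \leq L_e^\A\}$. The first three sit inside $\mi^\A(e) \cup (X \cap \bo^\A(e))$, so $C^*$ already satisfies the lemma unless $C^* - e$ lies entirely inside $Z$.

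The crucial step, and the one I expect to be the main obstacle, is ruling out $C^* - e \subseteq Z$. Assume for contradiction that it does. Lifting back through the contraction gives $e \in \spn_\M(Z \cup \lo^\A(e))$, producing a circuit $C_\M$ of $\M$ through $e$ with $C_\M - e \subseteq Z \cup \lo^\A(e)$. Now the ``not red'' half of uncolored, again via Lemma~\ref{blueAndRedCharacterization}, supplies a cocircuit $D_1$ of $\M$ through $e$ with $D_1 - e \subseteq F^{*\B}(e)$. The circuit--cocircuit intersection property forces $|C_\M \cap D_1| \geq 2$, so some $g \in (C_\M \cap D_1) - e$ exists. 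But $g \in F^{*\B}(e)$ requires $L_e^\A < U_g^\B$, whereas $g \in Z$ gives $U_g^\B = b_g \leq L_e^\A$ and $g \in \lo^\A(e)$ gives $U_g^\B \leq U_g^\A \leq L_e^\A$---both impossible. Hence $C^* - e$ must meet $\mi^\A(e) \cup (X \cap \bo^\A(e))$, and $C^*$ is the circuit the lemma asks for.
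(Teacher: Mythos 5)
Your proof is correct. It follows the same overall plan as the paper's --- both use the failure of blueness to produce a circuit of $\M'_e$ contained (apart from $e$) in $\mi^\A(e)\cup(X\cap\bo^\A(e))$ plus the same ``bad'' set (your $Z$ is exactly the paper's $Y=\{f\in\bo^\A(e)\setminus X: b_f\le L_e\}$), and both invoke the failure of redness to rule out the circuit being confined to that bad set --- but the mechanics differ in two ways. The paper first passes to an intermediate revelation $\widetilde\B$ that reveals only $\bo^\A(e)\setminus X$, so that the whole argument can be phrased in terms of the four-way partition $\lo,\mi,\hi,\bo$ with respect to $\widetilde\B$, and it closes the argument purely with span/cospan identities, using the complementation fact that $e\in\cospn Q$ implies $e\notin\spn[(E-e)\setminus Q]$ to exclude $e\in\spn[\lo^\A(e)\cup Y]$. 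You instead work directly in $(\M,\B)$, absorbing the revealed $\mi^\A(e)$ elements explicitly into the decomposition (they stay inside $\mi^\A(e)$, so this is harmless), extract an explicit circuit witness for not-blue and an explicit cocircuit witness for not-red, and finish via the circuit--cocircuit orthogonality property $|C\cap D|\neq 1$; this is the dual-witness counterpart of the paper's span/cospan exclusion and is equally valid. Minor points you leave implicit but which check out: $e$ is not a loop of $\M'_e$ (since $e$ not red in $\B$ gives $e\notin\spn\lo^\B(e)\supseteq\spn\lo^\A(e)$, so the circuits of $\M'_e$ through $e$ you construct are genuine), and $L_e^\B=L_e^\A$, $U_e^\B=U_e^\A$ because $e\in X$ is unrevealed, which is what makes your infima/suprema comparisons for $F^\B(e)$, $F^{*\B}(e)$, $Z$ and $\lo^\A(e)$ go through.
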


\begin{proof}
    	Consider the revelation  $\widetilde \B\in \R(\bo(e)\setminus X, \A)$ such that $\widetilde\B(f) = \B(f) (= \{B_f\})$ if $f\in \bo(e)\setminus X$ and $\widetilde\B(f)=\A(f)$ otherwise and note that $\B \in \R (E\setminus X, \widetilde \B)$. By Lemma \ref{colorPreservationRevelation}, since $e$ is uncolored in $(\M,\B)$, we get that $e$ is also uncolored in $(\M,\widetilde \B)$. Define the auxiliar sets $Y= \{ f\in \bo^\A(e) \setminus X :  B_f \leq L_e \}$ and $\overline{Y}=\{ f\in \bo^\A(e) \setminus X :  B_f > L_e \}$. Note that $\lo^{\widetilde \B} (e) =\lo^{\A} (e) \cup Y; \hi^{\widetilde \B}(e) = \hi^{\A} (e) \cup \overline{Y}; \mi^{\widetilde \B}(e) = \mi^{\A} (e);$ and $\bo^{\widetilde \B} (e) = X \cap \bo^{\A}(e)$.
	Since $e$ is uncertain and it is not blue nor red in $(\M,{\widetilde B})$ we get:
	$e\in \spn [\lo^{\widetilde \B}(e)\cup \mi^{\widetilde \B}(e)\cup \bo^{\widetilde \B}(e)] = \spn[\lo^\A(e)\cup \mi^\A(e)\cup Y \cup (X\cap \bo^\A(e))],$ and $
\!\!\!\!\!e\in \cospn [\hi^{\widetilde \B}(e)\cup \mi^{\widetilde \B}(e)\cup \bo^{\widetilde \B}(e)] = \cospn[\hi^\A(e)\cup \mi^\A(e)\cup \overline Y \cup (X\cap \bo^\A(e))] .$
Using that $e\in \cospn[Q]$ implies $e \notin \spn [(E-e)\setminus Q]$ for any $Q\subseteq E$, we conclude: \begin{equation}e\in \spn [\lo^\A(e) \cup \mi^\A(e) \cup Y \cup (X \cap \bo^\A(e))] \setminus \spn [\lo^\A(e) \cup Y]. \label{eqn1}\end{equation}
	
    From \eqref{eqn1}, $e \in \spn [\lo^\A(e) \cup \mi^\A(e) \cup Y \cup (X \cap \bo^\A(e))] \setminus \lo^\A(e)=\spn_{M'_e} [\mi^\A(e) \cup Y \cup (X \cap \bo^\A(e))]$, where the equality follows from properties of contraction and deletion.  Therefore there is a circuit $C$ in $\M'_e$ such that $e\in C$ and $C-e \subseteq \mi^\A(e) \cup Y \cup X\cap \bo^\A(e)$. If $(C-e)\cap [\mi^\A(e)\cup (X \cap \bo^\A(e))]= \emptyset$ , we would have that $C-e \subseteq Y$, implying that $e\in \spn_{\M'_e}Y=\spn[\lo^\A(e)\cup Y]\setminus \lo^A(e)$ which contradicts \eqref{eqn1}.
\end{proof}

The previous lemma is useful to characterize the sets that intersect every feasible set.
\begin{definition}\label{witnessSet}
	A set $X\subseteq E$ is a witness set if it intersects every feasible set.
\end{definition}

Witness sets have been studied before in the context of online adaptive algorithms for MST and matroids. Since the definition of feasible is slightly different in that settings (they are feasible for the verification problem, in which one knows the real values a priori), these witness sets are also different from ours. 

\begin{lemma}\label{witnessSetsCharacterisation}
	Let $X\subseteq E$. The following statements are equivalent:
	\begin{romanenumerate}
		\item $X$ is a witness set.
		\item There exists an uncertain element $e\in X$ and a circuit $C$ in $\M_e'$ such that $e \in C$ and $C \cap [\mi (e) \cup (X \cap \bo(e))] \neq \emptyset$. 
	\end{romanenumerate}
\end{lemma}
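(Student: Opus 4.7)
The plan is to combine three earlier results: Lemma \ref{uncoloredUncertainInRevelation}, which guarantees the existence of the distinguished circuit whenever an uncertain element $e\in X$ fails to be colored under a revelation of $E\setminus X$; Theorem \ref{existanceCharacterization}, which equates the existence of an $\A$-basis with all uncertain elements being colored; and Lemma \ref{feasibilityPreservedUnderRevelation}, which preserves feasibility under further revelations. The set $E\setminus X$ will play a distinguished role throughout, as the \emph{canonical} candidate for a feasible set disjoint from $X$. I will prove both implications via their contrapositives.

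For $(i)\Rightarrow(ii)$, I will assume (ii) fails and show that $F:=E\setminus X$ is feasible, which refutes (i). Given any $\B\in\R(F,\A)$, the only uncertain elements of $(\M,\B)$ are the elements of $X$ that are uncertain in $\A$. For any such $e$, the failure of (ii) forbids every circuit $C$ of $\M'_e$ through $e$ from satisfying $(C-e)\cap[\mi(e)\cup(X\cap\bo(e))]\neq\emptyset$; applying the contrapositive of Lemma \ref{uncoloredUncertainInRevelation} forces $e$ to be colored in $\B$. With every uncertain element colored, Theorem \ref{existanceCharacterization} produces a $\B$-basis, so $F$ is indeed feasible and disjoint from $X$.

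For $(ii)\Rightarrow(i)$, I will fix an arbitrary feasible $F$ and assume by contradiction that $F\cap X=\emptyset$, so $F\subseteq E\setminus X$. Taking any $\B'\in\R(E\setminus X,\A)$, note that $\B'$ already reveals $F$ completely; by Lemma \ref{feasibilityPreservedUnderRevelation}, $F$ remains feasible in $(\M,\B')$, and since it is already fully revealed, $(\M,\B')$ itself admits a $\B'$-basis. Theorem \ref{existanceCharacterization} then makes every uncertain element of $(\M,\B')$ colored; in particular the uncertain $e\in X$ supplied by (ii) is colored in $\B'$. Since $\B'$ was arbitrary, the contrapositive of Lemma \ref{uncoloredUncertainInRevelation} forbids the circuit $C$ demanded by (ii), a contradiction.

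The principal subtlety lies precisely in this bridging step: Lemma \ref{uncoloredUncertainInRevelation} is phrased relative to revelations of the \emph{entire} complement $E\setminus X$, while an arbitrary feasible set $F$ disjoint from $X$ may be a strict subset; Lemma \ref{feasibilityPreservedUnderRevelation} is exactly what allows one to upgrade feasibility of $F$ to every revelation of the full complement $E\setminus X$, so that the circuit-based criterion can be invoked. Once that observation is in place, both implications reduce to a direct application of the contrapositive of Lemma \ref{uncoloredUncertainInRevelation} together with Theorem \ref{existanceCharacterization}.
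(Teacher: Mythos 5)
Your direction (i)$\Rightarrow$(ii) is fine and is essentially the paper's own argument in contrapositive form: if no element of $X$ admits the circuit of (ii), then for any $\B\in\R(E\setminus X,\A)$ the contrapositive of Lemma \ref{uncoloredUncertainInRevelation} colors every uncertain element, Theorem \ref{existanceCharacterization} yields a $\B$-basis, and $E\setminus X$ is a feasible set disjoint from $X$. Likewise, your reduction of (ii)$\Rightarrow$(i) to showing that $E\setminus X$ is infeasible is sound (though simple monotonicity of feasibility under supersets already gives it, without needing Lemma \ref{feasibilityPreservedUnderRevelation}).

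The genuine gap is in the final step of (ii)$\Rightarrow$(i). Having shown that the element $e$ supplied by (ii) is colored in \emph{every} $\B'\in\R(E\setminus X,\A)$, you conclude that ``the contrapositive of Lemma \ref{uncoloredUncertainInRevelation} forbids the circuit $C$.'' But that lemma is a one-way implication: uncolored in some $\B'$ $\Rightarrow$ circuit exists; its contrapositive is: no circuit $\Rightarrow$ colored. What you are using is its \emph{converse}: circuit exists $\Rightarrow$ $e$ is uncolored in some revelation of $E\setminus X$. This converse is true, but it is not established anywhere in your argument, and proving it is precisely the hard part of the paper's proof. There, from the circuit $C$ one sets $Y=(\bo(e)\setminus X)\cap C$, shows $e\in\spn_{\M'_e}[\mi(e)\cup Y\cup(X\cap\bo(e))]\setminus\spn_{\M'_e}Y$ (the exclusion uses the minimality of $C$ as a circuit), deduces $e\in\spn[\lo(e)\cup\mi(e)\cup Y\cup(X\cap\bo(e))]$ and $e\in\cospn[\hi(e)\cup\mi(e)\cup\overline Y\cup(X\cap\bo(e))]$, and then constructs two realizations $w^+,w^-\in\R(E,\A)$ agreeing outside $X\cap\bo(e)$ and $e$, under which $e$ is respectively the unique heaviest element of a circuit (hence in no $w^+$-basis) and the unique lightest element of a cocircuit (hence in every $w^-$-basis). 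This pair of realizations, which induce one common revelation of any feasible $F$ disjoint from $X$, is what produces the contradiction. Without this construction (or a substitute for it), your converse direction asserts exactly what remains to be proved.
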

\begin{proof}
Let $X$ be a witness set. Since $E\setminus X$ isn't feasible there is a revelation $\B \in \R(E\setminus X,\A)$ such that there is no $(\M,\B)$-basis and by Theorem \ref{existanceCharacterization} we must have an element $e$ that is uncertain and uncolored in $(\M,\B)$. Note that $e \in X$ as every element in $E\setminus X$ is certain in $(\M,\B)$. We conclude by using Lemma \ref{uncoloredUncertainInRevelation} on $e$.
	
	For the converse, set $Y = [\bo(e) \setminus X ]\cap C$, $\overline{Y}= [\bo(e)\setminus X] \setminus C$ and note that
	$e \in \spn_{\M_e'} (C-e)= \spn_{\M_e'} [(C \cap \mi(e))  \cup (C \cap (\bo(e) \setminus X)) \cup (C \cap \bo(e) \cap X)]  \subseteq \spn_{\M_e'}  [\mi(e) \cup Y  \cup  (X \cap \bo(e)) ].$
	
	If $e\in \spn_{\M'_e} (Y)$  there would be a circuit $D$ in $ \M'_e$ such that $D-e \subseteq Y\subseteq C$, but since $C \cap [\mi(e)\cup (X \cap \bo(e))]\neq \emptyset$ we get $D\subsetneq C$ which contradicts the minimality of $C$ as circuit. Then,
	$e \in \spn_{\M'_e} [\mi(e) \cup Y \cup (X \cap \bo(e))] \setminus \spn_{\M'_e} Y$, which is included in 
	%& = ( \spn [\lo(e) \cup \mi(e) \cup Y \cup (X \cap \bo(e))]  \setminus \lo(e) ) \setminus (\spn [\lo(e)\cup Y] \setminus \lo(e)) \\
	%& = ( \spn [\lo(e) \cup \mi(e) \cup Y \cup (X \cap \bo(e))] \setminus \spn [\lo(e)\cup Y] ) \setminus \lo(e)
	$\spn [\lo(e) \cup \mi(e) \cup Y \cup (X \cap \bo(e))] \setminus \spn [\lo(e)\cup Y].$
	
	As $e \notin \hi(e)\cup \mi(e)\cup \overline Y \cup (X\cap \bo(e))$ we conclude that $e\in \spn [\lo(e) \cup \mi(e) \cup Y \cup (X\cap \bo(e))]$ and $e \in \cospn [\hi(e) \cup \mi(e) \cup \overline Y \cup (X\cap \bo(e))]$.
	
	Choose two revelations $w^+,w^- \in \R(E,\A)$ as follows:
	\begin{align*}
	\!\!\!\!\!\!\!w_f^+ & \in \begin{cases}
	(L_e,U_e)\cap \A(f) & \text{if $f\in \mi(e)$},\\
	(-\infty, L_e]\cap \A(f) & \text{if $f \in Y$ or}\\
	&\text{$f  \in X \cap \bo(e)$}, \\
	[U_e,\infty)\cap \A(f) & \text{if $f \in \overline Y$}, \\
	\A(e) \cap (L_e, U_e] & \text{if $f=e$}.
	\end{cases} &\!\!\!
	w_f^- & \in \begin{cases}
	 [U_e,\infty)\cap \A(f) & \text{if $f \in X \cap \bo(e)$}, \\
	\A(e) \cap [L_e,U_e) & \text{if $f=e$}, \\
	\{ w_f^+ \}   & \text{otherwise.}
	\end{cases} 
	\end{align*}
	Note that $w^+$ and $w^-$ only differ on $X \cap \bo(e)$ and $e$. As $e \in \spn [\lo(e) \cup \mi(e) \cup Y \cup (X \cap \bo(e))] $ it is the unique heaviest element in a circuit in $(\M,w^+)$, therefore it is in no $(\M,w^+)$-basis. Similarly, since $e\in \cospn [\hi(e) \cup \mi(e) \cup \overline Y \cup (X \cap \bo(e))] $ it is the unique lightest element in a cocircuit in $(\M,w^-)$, hence it is in every $(\M,w^-)$-basis. To conclude suppose that there is a feasible query set $F$ such that $X\cap F =\emptyset$, we then pick the  revelation $\B \in \R(F,\A)$ such that $\B(f)=\{w_f^+\}=\{w_f^-\}$ if $f\in F$, and $\B(f)=\A(f)$ otherwise.
	
	Select any $\B$-basis $T$. As $w^+,w^- \in \R(E,\B)$, $T$ is both a $w^+$-basis and a $w^-$-basis. By the previous paragraph, this implies that $e\not\in T$ and $e\in T$ which is a contradiction.
\end{proof}

\begin{lemma}\label{blocker}
\!\!The minimal feasible queries are the sets intersecting every witness set.
\end{lemma}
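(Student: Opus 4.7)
The plan is to establish the stronger equivalence that a set $F\subseteq E$ is feasible if and only if $F$ intersects every witness set; taking inclusion-minimal sets on both sides then yields the statement of the lemma, since inclusion-minimal feasible queries and inclusion-minimal hitting sets of the witness-set hypergraph coincide.

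First I would dispatch the easy direction directly from Definition \ref{witnessSet}: if $F$ is feasible and $X$ is any witness set, then $X$ by definition intersects every feasible query, hence $X\cap F\neq\emptyset$.

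The substance lies in the converse, which I would prove by contrapositive by showing that if $F$ is not feasible, then $E\setminus F$ is itself a witness set (and is, of course, disjoint from $F$). Non-feasibility of $F$ gives a revelation $\B\in\R(F,\A)$ under which no $\B$-basis exists, and Theorem \ref{existanceCharacterization} then produces an uncertain element $e$ that is uncolored in $(\M,\B)$. Since every element of $F$ is certain under $\B$, necessarily $e\in E\setminus F$. I would then apply Lemma \ref{uncoloredUncertainInRevelation} with the choice $X=E\setminus F$, so that the hypothesis $\B\in\R(E\setminus X,\A)=\R(F,\A)$ is met; this delivers a circuit $C$ in $\M'_e$ with $e\in C$ and $(C-e)\cap[\mi(e)\cup((E\setminus F)\cap\bo(e))]\neq\emptyset$. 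But this is exactly condition (ii) of Lemma \ref{witnessSetsCharacterisation} applied to the set $X=E\setminus F$, so $E\setminus F$ is a witness set, finishing the contrapositive and the equivalence.

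The main obstacle is essentially notational bookkeeping: one has to align the roles of queried versus unqueried elements in Lemma \ref{uncoloredUncertainInRevelation} and in Lemma \ref{witnessSetsCharacterisation} so that the same set $E\setminus F$ plays both parts consistently (the queried set in Lemma \ref{uncoloredUncertainInRevelation} is its complement $F$, while the witness-set parameter in Lemma \ref{witnessSetsCharacterisation} is $E\setminus F$ itself). Once this matching is made and the equivalence between being feasible and hitting every witness set is in hand, the lemma's statement about minimal feasible queries follows by restricting to the inclusion-minimal members on both sides.
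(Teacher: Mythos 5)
Your proof is correct, but it takes a different route from the paper. The paper's proof is purely abstract: it observes that, by definition, the minimal witness sets form the blocker of the clutter of minimal feasible queries, and then invokes the classical fact that the blocker of the blocker of a clutter is the original clutter (Lehman's theorem, cited as \cite[Theorem 77.1]{Schrijver}); no matroid structure is used at all. You instead prove the stronger equivalence ``$F$ is feasible iff $F$ meets every witness set'' by the complement trick: if $F$ is infeasible you exhibit $E\setminus F$ as a witness set disjoint from $F$, going through Theorem \ref{existanceCharacterization}, Lemma \ref{uncoloredUncertainInRevelation} and the implication (ii)$\Rightarrow$(i) of Lemma \ref{witnessSetsCharacterisation}; this is non-circular since Lemma \ref{witnessSetsCharacterisation} is established before and independently of the present lemma. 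Your argument is self-contained (it avoids citing the blocker duality theorem, whose proof is essentially the same complement trick unfolded) and yields the slightly stronger feasibility characterization explicitly; on the other hand, it is heavier than necessary: the step ``$F$ infeasible $\Rightarrow E\setminus F$ is a witness set'' follows already from the monotonicity of feasibility (any feasible set contained in $F$ would make $F$ feasible, since supersets of feasible queries are feasible), so the detour through the structural lemmas, while valid, buys nothing. The paper's one-line appeal to blocker duality is shorter but less explicit about why the complementation works in this instance.
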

\begin{proof}
Recall that a \emph{clutter} is a family of sets such that no one is contained in another. The \emph{blocker} of a clutter is the clutter of all minimal sets that intersect the first one. Thus, the minimal witness sets are the blocker of the minimal feasible queries. A basic result in packing and covering theory (see e.g., \cite[Theorem 77.1]{Schrijver}) states that the blocker of the blocker of a clutter is again the original clutter. The lemma follows from this fact.
\end{proof}

As we see below, minimal witness sets cannot be large: they can have at most 2 elements.

\begin{corollary}
	\label{existsCriticalPair}
	Let $X$ be a witness set such that $|X|\geq 2$. Then, there exists distinct $e,f \in X$ such that $\{e,f\}$ is a witness set.
\end{corollary}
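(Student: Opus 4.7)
The plan is to apply the characterization from Lemma \ref{witnessSetsCharacterisation} directly, extract the certifying circuit, and then case-split on where it meets the relevant distinguished sets. Since $X$ is a witness set, the lemma guarantees an uncertain $e\in X$ together with a circuit $C$ of $\M'_e$ containing $e$ such that
\[
C\cap\bigl[\mi(e)\cup (X\cap \bo(e))\bigr]\neq\emptyset.
\]
My idea is to use this same circuit $C$ (and the same distinguished element $e$) to certify that a suitable two-element subset of $X$ is already a witness set.

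I split on which of the two terms in the union contributes. In the first case, assume $C\cap\mi(e)\neq\emptyset$. Since $|X|\geq 2$, I can pick any $f\in X$ with $f\neq e$. Then
\[
C\cap\bigl[\mi(e)\cup (\{e,f\}\cap \bo(e))\bigr]\supseteq C\cap\mi(e)\neq\emptyset,
\]
so $C$ certifies via Lemma \ref{witnessSetsCharacterisation} that $\{e,f\}$ is a witness set. In the second case, assume $C\cap\mi(e)=\emptyset$, which forces $C\cap(X\cap\bo(e))\neq\emptyset$. I pick any $f\in C\cap X\cap\bo(e)$; note $f\neq e$ because $f\in\bo(e)\subseteq E-e$, and $f\in X$ by construction. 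Then $f\in\{e,f\}\cap \bo(e)$ and $f\in C$, so again $C\cap[\mi(e)\cup(\{e,f\}\cap \bo(e))]\neq\emptyset$, and Lemma \ref{witnessSetsCharacterisation} shows $\{e,f\}$ is a witness set.

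There is essentially no obstacle here: the content is already in the characterization lemma. The only subtlety is to notice that reducing $X$ down to $\{e,f\}$ could in principle shrink $X\cap\bo(e)$ and thus invalidate the certifying circuit; the case split is designed precisely to avoid this, by picking $f$ from the "dangerous" intersection $C\cap X\cap\bo(e)$ whenever $\mi(e)$ alone does not already do the job. This gives the desired pair $e,f\in X$ with $\{e,f\}$ a witness set.
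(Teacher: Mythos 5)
Your proof is correct and follows essentially the same route as the paper: both invoke Lemma \ref{witnessSetsCharacterisation} to get the uncertain element $e\in X$ and certifying circuit $C$, then case-split so that either $C\cap\mi(e)$ already certifies $\{e,f\}$ for an arbitrary $f\in X-e$, or an element of $C\cap X\cap\bo(e)$ is taken as $f$. The only cosmetic difference is that the paper phrases the first case as ``$\{e\}$ is itself a witness set, hence so is any superset $\{e,f\}$,'' while you re-apply the characterization directly to $\{e,f\}$; the content is identical.
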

\begin{proof}
	If $X\cap F \neq \emptyset$ for every feasible query set $F$ then, by Lemma \ref{witnessSetsCharacterisation}, there exists an uncertain $e \in X$, a circuit $C$ in $ \M'_e$ such that $e \in C$ and $C\cap [\mi(e) \cup (X \cap \bo(e))]\neq \emptyset$. 
	
	If $C \cap (X \cap \bo(e))=\emptyset$, then $C \cap \mi(e)\neq \emptyset$ and by Lemma \ref{witnessSetsCharacterisation} we have that $\{e\} \cap F\neq \emptyset$ for every $F$ feasible query set. Picking any $f \in X-e$ we conclude that $\{e,f\}\cap F \neq \emptyset$ for every feasible query set $F$. If $C \cap (X \cap \bo(e))\neq \emptyset$, select any $g \in C \cap (X \cap \bo(e))$. Once again, Lemma \ref{witnessSetsCharacterisation} lets us conclude that $\{e,g\} \cap F \neq \emptyset$ for every $F$ feasible query set.    
\end{proof}

\begin{definition}\label{criticalRelation}
Let $\core = \{ e \in E \colon \{e\} \text{ is a witness set} \}$ and $\overline{\core}=E\setminus \core$. Define the graph $G^{\text{wit}}=(\overline\core, E^{\text{wit}})$ where $ef \in E^{\text{wit}}$ if $\{e,f\}$ is a witness set.
\end{definition}

By Lemma \ref{blocker}, minimal feasible sets are exactly those sets containing all elements in $\core$ together with a vertex cover of $G^{\text{wit}}$. The following clean characterization of $\core$ follows directly from Lemma \ref{witnessSetsCharacterisation}.

\begin{lemma}\label{coreCharacterization}
  Let $e$ be an uncertain element.  $e \in \core$ if and only if there is a circuit $C$ in $\M'_e$ such that $e\in C$ and $C\cap \mi(e)\neq\emptyset$.
\end{lemma}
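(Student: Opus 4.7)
The plan is to derive this as a direct specialization of Lemma \ref{witnessSetsCharacterisation} applied to the singleton set $X=\{e\}$. The key observation that makes the specialization clean is that $\bo(e) \subseteq E - e$ by definition, so $e \notin \bo(e)$, and therefore $\{e\} \cap \bo(e) = \emptyset$. With this, the condition ``$C \cap [\mi(e) \cup (X \cap \bo(e))] \neq \emptyset$'' collapses exactly to ``$C \cap \mi(e) \neq \emptyset$''.

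For the forward direction, assume $e$ is uncertain with $e \in \core$, i.e.\ $\{e\}$ is a witness set. Lemma \ref{witnessSetsCharacterisation} (i)$\Rightarrow$(ii) then guarantees an uncertain element of $\{e\}$ (necessarily $e$ itself, consistent with our hypothesis) together with a circuit $C$ in $\M'_e$ containing $e$ such that $C \cap [\mi(e) \cup (\{e\} \cap \bo(e))] \neq \emptyset$. Since $\{e\} \cap \bo(e) = \emptyset$, this yields exactly $C \cap \mi(e) \neq \emptyset$, as desired.

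For the converse, suppose $e$ is uncertain and there is a circuit $C$ in $\M'_e$ with $e \in C$ and $C \cap \mi(e) \neq \emptyset$. Taking $X = \{e\}$, the element $e \in X$ is uncertain and $C \cap [\mi(e) \cup (X \cap \bo(e))] = C \cap \mi(e) \neq \emptyset$. By Lemma \ref{witnessSetsCharacterisation} (ii)$\Rightarrow$(i), $\{e\}$ is a witness set, so $e \in \core$.

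The only ``obstacle'' is really a bookkeeping check, namely confirming that $e \notin \bo(e)$ so that the singleton instance of the witness-set characterization degenerates correctly; since the definition of $\bo(e)$ explicitly restricts to $E - e$, this is immediate. Hence the whole statement is a two-line corollary of Lemma \ref{witnessSetsCharacterisation}.
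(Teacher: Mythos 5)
Your proof is correct and is exactly the paper's route: the paper states that Lemma \ref{coreCharacterization} ``follows directly from Lemma \ref{witnessSetsCharacterisation}'', i.e.\ the specialization to $X=\{e\}$, and your observation that $\{e\}\cap\bo(e)=\emptyset$ (since $\bo(e)\subseteq E-e$) is precisely the bookkeeping that makes the reduction work in both directions.
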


We can turn the previous lemma into an algorithm that computes  \core. In order to do this we compute the connected component\footnote{Recall that $e$ is connected to $f$ in a matroid if and only if there is a circuit that contains $e$ and $f$. A connected component is an equivalence class of this equivalence relation (see, eg. \cite[Section 4.1]{Oxley}).} of $\M'_e$ that contains $e$ and check if it has non-empty intersection with $\mi(e)$. We compute connected components with an algorithm due to Krogdhal \cite{Krogdahl77} that takes polynomial time. Therefore, the previous procedure also takes polynomial time. In what follows we  show that $G^{\text{wit}}$ has a very nice structure. 

\begin{lemma}
	\label{criticalLemma}
	\begin{romanenumerate}
		\item Let $e,f \in 	\overline{\core}$ distinct. $ef\in E^{\text{wit}}$ if and only if $\A(e)=\A(f)=\{L_e,U_e\}$ with $L_e<U_e$ and there is a circuit $C$ of $\M_e'$ such that $e, f \in C$.
		\item The connected components of the graph $G^{\text{wit}}$ are cliques.
	\end{romanenumerate}
\end{lemma}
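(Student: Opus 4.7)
The plan is to deduce both items from the characterizations of witness sets (Lemma \ref{witnessSetsCharacterisation}) and of $\core$ (Lemma \ref{coreCharacterization}), together with some bookkeeping about contracting/deleting between $\M'_e$ and $\M'_f$.

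For the backward direction of (i), I would directly verify Lemma \ref{witnessSetsCharacterisation} with $g=e$. From $\A(f)=\{L_e,U_e\}$ we have $\A(f)\cap(L_e,U_e)=\emptyset$ while $\A(f)$ meets both $(-\infty,L_e]$ and $[U_e,\infty)$, so $f\in\bo(e)$. The given circuit $C$ contains $e$ and satisfies $C\cap\{e,f\}\cap\bo(e)\ni f$, so $\{e,f\}$ is a witness set.

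For the forward direction, I would invoke Lemma \ref{witnessSetsCharacterisation} to choose $g\in\{e,f\}$ and a circuit $C$ in $\M'_g$ through $g$ with $C\cap[\mi(g)\cup\{e,f\}\cap\bo(g)]\neq\emptyset$. Since $g\in\overline\core$, Lemma \ref{coreCharacterization} kills the $\mi(g)$ term, forcing the other element of $\{e,f\}$ into $\bo(g)\cap C$; WLOG $g=e$, so $f\in\bo(e)$ (in particular $L_f\leq L_e$ and $U_f\geq U_e$) and $\{e,f\}\subseteq C$. The crux is now to tighten these inequalities and pin down $\A(e),\A(f)$. Assume for contradiction $L_f<L_e$ (the case $U_f>U_e$ is symmetric). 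Using $\A(e)\subseteq[L_e,U_e]$, $L_f<L_e$, and $U_e\leq U_f$, a short case split around whether $L_e,U_e$ are actually achieved in $\A(e)$ produces some $x\in\A(e)\cap(L_f,U_f)$, so $e\in\mi(f)$. Since $\lo(f)\subseteq\lo(e)$ and $\hi(f)\subseteq\hi(e)$, one has $\M'_e=\M'_f/(\lo(e)\setminus\lo(f))\setminus(\hi(e)\setminus\hi(f))$; the standard fact that every circuit of a contraction $M/A$ has the form $D\setminus A$ for some circuit $D$ of $M$ then lifts $C$ to a circuit $D$ of $\M'_f$ with $\{e,f\}\subseteq D$. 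But then $e\in D\cap\mi(f)$ certifies $f\in\core$ via Lemma \ref{coreCharacterization}, contradicting $f\in\overline\core$. Therefore $L_e=L_f$, $U_e=U_f$, and in particular $\M'_e=\M'_f$. Combining $f\in\bo(e)$ with $\A(f)\subseteq[L_f,U_f]=[L_e,U_e]$ yields $\A(f)\subseteq\{L_e,U_e\}$, and the uncertainty of $f$ forces $\A(f)=\{L_e,U_e\}$. Finally, if $\A(e)$ contained any point of $(L_e,U_e)$, then $e\in\mi(f)$ and the original circuit $C$ (now viewed as a circuit of $\M'_f$) would again certify $f\in\core$; hence $\A(e)=\{L_e,U_e\}$.

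For item (ii), given distinct $e_1,e_2,e_3\in\overline\core$ with $e_1e_2,\,e_2e_3\in E^{\text{wit}}$, part (i) yields $\A(e_1)=\A(e_2)=\A(e_3)=\{L,U\}$ for a common pair $L<U$, from which $\M'_{e_1}=\M'_{e_2}=\M'_{e_3}=:\M'$. Part (i) also provides circuits of $\M'$ through $\{e_1,e_2\}$ and through $\{e_2,e_3\}$. Since being in a common circuit is an equivalence relation on the ground set of any matroid, $e_1$ and $e_3$ share a circuit of $\M'$, and the backward direction of (i) gives $e_1e_3\in E^{\text{wit}}$. The main technical obstacle is the circuit-lifting from $\M'_e$ to $\M'_f$ in the forward direction of (i); once it is in place, the rest is controlled bookkeeping with the endpoint definitions.
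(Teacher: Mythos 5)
Your proof is correct; the backward direction of (i) and item (ii) coincide with the paper's argument, but your forward direction of (i) takes a genuinely different route. The paper never applies Lemma \ref{witnessSetsCharacterisation} to $\{e,f\}$ directly: it picks a revelation $\B\in\R(E\setminus\{e,f\},\A)$ with no $\B$-basis, uses Lemmas \ref{feasibilityPreservedUnderRevelation} and \ref{dontRevealColored} (and $f\notin\core$) to show $e$ is uncolored in $(\M,\B)$, applies Lemma \ref{uncoloredUncertainInRevelation} to $e$, and then repeats the whole argument with the roles of $e$ and $f$ exchanged; having both $f\in\bo(e)$ and $e\in\bo(f)$ immediately forces $L_e=L_f$, $U_e=U_f$ and $\A(e)=\A(f)=\{L_e,U_e\}$ with no further matroid work. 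You instead use the one-sided output of Lemma \ref{witnessSetsCharacterisation} (only $f\in\bo(e)$ and a circuit of $\M'_e$ through $e,f$) and restore the symmetry by hand: the endpoint analysis giving $e\in\mi(f)$ whenever $L_f<L_e$ or $U_f>U_e$, and the lifting of the circuit through $\M'_e=\M'_f/(\lo(e)\setminus\lo(f))\setminus(\hi(e)\setminus\hi(f))$ to a circuit of $\M'_f$, contradicting $f\in\overline{\core}$ via Lemma \ref{coreCharacterization}; the same device then excludes points of $\A(e)$ in $(L_e,U_e)$. The steps you flag do hold: under $L_f\leq L_e\leq U_e\leq U_f$ one has $\lo(f)\subseteq\lo(e)$ and $\hi(f)\subseteq\hi(e)$, neither of $e,f$ lies in the other's $\lo$ or $\hi$, so the minor identity and the standard circuit-lifting through contraction (circuits of $N/A$ are of the form $D\setminus A$ for a circuit $D$ of $N$) are legitimate; note also that $f\in\bo(e)$ with $e$ uncertain forces $f$ uncertain, which you need before invoking Lemma \ref{coreCharacterization} for $f$ (you use this implicitly). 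The trade-off: the paper's argument is shorter once its feasibility lemmas are in place and needs no bookkeeping between the minors $\M'_e$ and $\M'_f$, while yours is more self-contained at this point of the paper, relying only on the two combinatorial characterizations (Lemmas \ref{witnessSetsCharacterisation} and \ref{coreCharacterization}) at the cost of the lifting argument.
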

\begin{proof}
	\begin{romanenumerate}
		\item Suppose  $ef\in E^{\text{wit}}$. Since $\{e,f\}$ is a witness set, $E\setminus \{e,f\}$ is not feasible. Let $\B \in \R(E\setminus \{e,f\},\A)$ be a revelation without $\B$-basis. Since $f \notin \core$, we have that $E-f$ is feasible in $(\M,\A)$ and by Lemma \ref{feasibilityPreservedUnderRevelation} it is also feasible in $(\M,\B)$. If $e$ was colored in $(\M,\B)$ then, by Lemma \ref{dontRevealColored} we would conclude that $(E-f)-e = E\setminus \{e,f\}$ is feasible in $(\M,\B)$. But all elements in $E\setminus\{e,f\}$ are already certain in $(\M,\B)$, from which we deduce that $\emptyset$ is feasible in $(\M,\B)$, which is a contradiction.
		
		We conclude that $e$ is uncolored in $(\M,\B)$.
		From this, we can use Lemma \ref{uncoloredUncertainInRevelation} to obtain a circuit $C$ in $\M'_e$ such that $e \in C$ and $C \cap [\mi^\A(e) \cup (\{e,f\}\cap \bo^\A(e))]\neq \emptyset$.	Since $e\not \in \core$, we have by Lemma \ref{coreCharacterization} that $C \cap \mi(e) = \emptyset$.  Then $C \cap \{e,f\}  \cap \bo(e)\neq \emptyset$, consequently $e,f \in C$ and $f \in \bo(e)$.
		We can now use the same argument for $f$, concluding that $e \in \bo(f)$. The only way for $e\in \bo(f)$ and $f\in \bo(e)$ to occur at the same time is that  $\A(e)=\A(f)=\{L_e,U_e\}$. Finally, since witness sets do not contain elements that are certain (this follows since minimal feasible sets only have uncertain elements, hence by removing certain elements from a witness set it would still intersect every feasible set) we must have $L_e<U_e$.
		
		We now prove the converse. As $e$ is an uncertain element such that there is a circuit $C $ in $\M_e'$ and $f \in C \cap \{e,f\} \cap \bo(e)  $ by Lemma \ref{witnessSetsCharacterisation} we conclude that $ef \in E^{\text{wit}}$.
		
		\item We only need to show that whenever $ef, fg \in E^{\text{wit}}$ we also have $eg \in E^{\text{wit}}$. Suppose that $ef, fg \in E^{\text{wit}}$. By the previous item we have $\A(e)=\A(f)=\A(g)=\{L_e,U_e\}$, in particular $\M_e'=\M_f'=\M_g'\doteq \M'$. The previous item also allows us to conclude that there are two circuits $C^1, C^2$ in $\M'$ such that $e,f \in C^1$ and $f,g\in C^2$. Then $e,f$ and $g$ are in the same matroid connected component in $\M'$. Therefore there is a circuit $C^3$ in $\M'$ such that $e,g \in C^3$ and using the previous item we conclude that $eg\in E^{\text{wit}}$.\qedhere
		\end{romanenumerate}	\end{proof}
		
		We can test if $ef\in E^{\text{wit}}$ similarly to how we computed $\core$: we start by considering elements with areas of size two (by checking if $(L_e,U_e)\cap \A(e)=\emptyset$ for every element $e$). If $e$ and $f$ have the same two-element uncertainty area, we check if they belong to the same connected component in $\M'_e=\M'_f$ using Krogdhal's algorithm \cite{Krogdahl77}. %In this way we can compute $G^{\text{wit}}$ in polynomial time.
		
		\begin{theorem}\label{feasibleMinSize} $X\subseteq E$ is a minimal feasible query if and only  if $\core \subseteq X$ and $X$ intersects all but one element in each connected component of $G^{\text{wit}}$.
	\end{theorem}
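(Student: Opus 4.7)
The plan is to apply Lemma \ref{blocker} and identify the minimal feasible queries as the minimal transversals of the family of minimal witness sets. I will first determine exactly what the minimal witness sets look like, and then exploit the clique structure of $G^{\text{wit}}$ provided by Lemma \ref{criticalLemma}(ii) to read off the minimal transversals.

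First I would show that every minimal witness set has size at most $2$. By Corollary \ref{existsCriticalPair}, any witness set of size $\geq 2$ contains a witness pair, so a minimal one cannot be larger than $2$. The size-$1$ minimal witness sets are exactly $\{e\}$ with $e\in \core$, by the definition of $\core$. A size-$2$ minimal witness set must consist of two elements of $\overline{\core}$ (otherwise one of its singletons would already be a witness set, contradicting minimality), and by the definition of $G^{\text{wit}}$ these are precisely the edges of $G^{\text{wit}}$. Note also that minimal witness sets contain no certain elements, since removing a certain element from a witness set still intersects every minimal feasible query (which, by Lemma \ref{dontRevealColored}, consists only of uncertain elements).

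Next, I would translate the transversal condition. A set $X$ intersects every minimal witness set iff $\core\subseteq X$ (to hit every singleton) and $X\cap \overline{\core}$ is a vertex cover of $G^{\text{wit}}$ (to hit every edge). Since $\core$ and $V(G^{\text{wit}})=\overline{\core}$ are disjoint, a transversal $X$ is minimal iff $X\cap \overline{\core}$ is a \emph{minimal} vertex cover of $G^{\text{wit}}$.

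Finally, I would exploit Lemma \ref{criticalLemma}(ii): each connected component of $G^{\text{wit}}$ is a clique. In a clique of size $k\geq 1$, a vertex cover must contain at least $k-1$ vertices (otherwise two uncovered vertices would leave an edge uncovered), and any set of $k-1$ vertices is a cover; moreover the empty cover is the minimal cover of an isolated vertex. Hence the minimal vertex covers of $G^{\text{wit}}$ are precisely the sets obtained by omitting exactly one vertex from each connected component. Combining this with the previous paragraph yields the stated characterization. The only delicate point, and the place I would be most careful, is the bookkeeping between ``hitting every minimal witness set'' and ``covering every edge of $G^{\text{wit}}$'' — making sure no subtle edge case (e.g.\ a component of size $1$, or an element of $\core$ that is also incident to potential witness pairs) is mishandled. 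Both situations are in fact ruled out by the setup: isolated components contribute ``omit one, keep zero,'' and the vertex set of $G^{\text{wit}}$ is disjoint from $\core$ by construction.
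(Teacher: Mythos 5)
Your proposal is correct and follows essentially the same route as the paper: apply Lemma \ref{blocker}, observe (via Corollary \ref{existsCriticalPair} and the definitions of $\core$ and $G^{\text{wit}}$) that minimal witness sets are exactly the singletons from $\core$ and the edges of $G^{\text{wit}}$, so minimal feasible queries are $\core$ plus a minimal vertex cover, and then use the clique structure from Lemma \ref{criticalLemma}(ii) to identify minimal vertex covers with ``omit exactly one vertex per component.'' You simply spell out the bookkeeping (size of minimal witness sets, disjointness of $\core$ and $\overline{\core}$, isolated vertices) that the paper leaves implicit.
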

\begin{proof}
By Lemma \ref{blocker} and the definition of $G^{\text{wit}}$ the minimal feasible queries $X$ satisfies that $\core \subseteq X$ and $X\cap \overline{\core}$ is a minimal vertex cover of $G^{\text{wit}}$. Since every connected component is a clique, the minimal vertex covers  of $G^{\text{wit}}$ are exactly those sets containing all but one element in each connected component. 
\end{proof}

\begin{corollary}\label{algorithmMCFQ}
      Let $c: E \to \mathbb R$ be any cost function. We can compute a minimum-cost feasible query in polynomial time.
\end{corollary}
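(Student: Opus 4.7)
The plan is to turn the structural characterization of Theorem \ref{feasibleMinSize} into a direct algorithm. First I would compute $\core$: for each uncertain element $e\in E$, Lemma \ref{coreCharacterization} says $e\in\core$ iff some circuit of $\M'_e$ through $e$ meets $\mi(e)$. Equivalently, compute the connected component of $\M'_e$ containing $e$ (via Krogdahl's algorithm \cite{Krogdahl77}) and test whether it intersects $\mi(e)$. Each such test requires only polynomially many independence and area queries, so $\core$ is obtained in polynomial time.

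Next I would build the graph $G^{\text{wit}}$ on vertex set $\overline\core$. By Lemma \ref{criticalLemma}(i), $ef\in E^{\text{wit}}$ iff $\A(e)=\A(f)=\{L_e,U_e\}$ with $L_e<U_e$ and $e,f$ share a circuit of $\M'_e$. For each candidate pair (those with the same two-element area, testable in constant time under the oracle assumption on $\A$), a second use of Krogdahl's algorithm determines whether they lie in the same connected component of $\M'_e=\M'_f$. Lemma \ref{criticalLemma}(ii) guarantees the components of $G^{\text{wit}}$ are cliques, so the same computation yields the clique partition $K_1,\dots,K_r$ of its non-isolated vertices.

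The optimization is then immediate. By Theorem \ref{feasibleMinSize}, a set $F$ is feasible iff $\core\subseteq F$ and $F$ contains at least $|K_i|-1$ vertices of each clique $K_i$; all remaining elements (isolated vertices of $G^{\text{wit}}$) are optional. Because the constraints decouple clique by clique, an optimum can be assembled greedily: for each $K_i$ pick $v_i\in\argmax_{e\in K_i}c_e$ and add $K_i\setminus\{v_i\}$ to $F$, also adding $v_i$ whenever $c_{v_i}\leq 0$; for each isolated vertex $v$, include $v$ iff $c_v<0$; finally adjoin $\core$. Each step is polynomial.

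No real obstacle stands in the way, since Lemmas \ref{coreCharacterization} and \ref{criticalLemma} supply the polynomial-time subroutines and Theorem \ref{feasibleMinSize} makes the minimum-cost problem separable across cliques. The only mild care needed is bookkeeping for possibly negative costs (so that expensive clique vertices or isolated vertices with $c<0$ are still included); under the natural assumption $c\geq 0$ for query costs, the description collapses to ``output $\core$ together with each clique with its most expensive vertex removed''.
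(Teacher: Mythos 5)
Your proposal is correct and follows essentially the same route as the paper: compute $\core$ via Lemma \ref{coreCharacterization} and Krogdahl's connected-component algorithm, build $G^{\text{wit}}$ via Lemma \ref{criticalLemma}, and then use Theorem \ref{feasibleMinSize} to select, per clique, everything except a most expensive vertex, adjoining $\core$ and all negative-cost elements. Your per-clique/isolated-vertex bookkeeping for negative costs is exactly the paper's ``add all negative-cost elements outside $F$'' step, so there is nothing to fix.
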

\begin{proof}
Computing $\core$ and $G^{\text{wit}}$ can be done in polynomial time and polynomial number of calls to the independence oracle of $\M$ or to minors of $\M$ (for example, $\M'_e$), since the oracle of independence of any minor of $\M$ can also be implemented using a polynomial number of calls to the oracle of $\M$.
One can compute a minimum-cost minimal size feasible query $F$ by simply returning a set  containing $\core$ and all but the most expensive element from each connected component of $G^{\text{wit}}$. If we allow negative costs, then the minimum-cost feasible query is $F$ together with all the negative cost elements outside $F$. An efficient implementation is discussed in Section \ref{sec:algorithm}
\end{proof}

We finish this section with two special cases of the last theorem.

\paragraph*{Interval Uncertainty Areas}
%By Lemma \ref{dontRevealColored} the set $S$ of all uncolored uncertain elements is  always a feasible query. Below we show that if every uncertainty area is an interval, $S$ is the unique minimal feasible query. 
\begin{corollary}\label{feasibleInterval}
Suppose that $\A(e)$ is an interval (of any type: open, closed, semiopen, trivial) for every element $e\in E$. Then the set $S$ of all uncolored uncertain elements is the only minimum-size feasible query.
\end{corollary}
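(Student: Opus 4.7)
The plan is to establish two structural properties of the interval case and then conclude via Theorem \ref{feasibleMinSize}. The key preliminary observation is that when every $\A(f)$ is an interval, $\bo(e)=\emptyset$ for every uncertain $e$: an element $f\in \bo(e)$ would need $\A(f)$ to meet both $(-\infty,L_e]$ and $[U_e,\infty)$ while avoiding the nonempty open interval $(L_e,U_e)$, but an interval containing points on both sides of the nondegenerate interval $[L_e,U_e]$ is convex, hence contains all of $[L_e,U_e]$ and therefore intersects $(L_e,U_e)$, a contradiction.

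From this, the first structural property I would derive is that $G^{\text{wit}}$ has no edges. By Lemma \ref{criticalLemma}(i), any $ef\in E^{\text{wit}}$ forces $\A(e)=\A(f)=\{L_e,U_e\}$ with $L_e<U_e$, and a two-element set with distinct entries is not an interval. So every connected component of $G^{\text{wit}}$ is a singleton.

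The second and more delicate property is that every uncolored uncertain element $e$ belongs to $\core$. By Lemma \ref{blueAndRedCharacterization}, uncolored means $e\in \spn F(e)$ and $e\in \cospn F^*(e)$. Unfolding the cospan condition gives $e\notin \spn \lo(e)$; and since $\bo(e)=\emptyset$, the span condition reduces to $e\in \spn[\lo(e)\cup \mi(e)]$. Combining these two and using the standard identity $\spn_{\M/\lo(e)}(X)=\spn_\M(X\cup \lo(e))\setminus \lo(e)$ for $X\subseteq E\setminus \lo(e)$, I obtain $e\in \spn_{\M'_e}\mi(e)$ with $e$ a non-loop of $\M'_e$. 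Extracting a fundamental-style circuit then yields a circuit $C$ of $\M'_e$ with $e\in C$ and $C-e\subseteq \mi(e)$, so in particular $C\cap \mi(e)\neq\emptyset$, and Lemma \ref{coreCharacterization} gives $e\in \core$. The main obstacle I anticipate is precisely this span/cospan manipulation through contraction and deletion; the rest is essentially combinatorial bookkeeping.

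To close the argument, Theorem \ref{feasibleMinSize} describes the minimal feasible queries as $\core$ extended by a choice of ``all but one'' element in each component of $G^{\text{wit}}$; since every component is a singleton, no element outside $\core$ may be chosen, and $\core$ is the unique minimal feasible query (hence the unique minimum-size one, as all minimal feasible queries have the same size). The inclusion $\core \subseteq S$ is immediate since any singleton witness set $\{e\}$ must intersect the always-feasible query $S$ of all uncolored uncertain elements, forcing $e\in S$; and the second structural property yields $S\subseteq \core$. Therefore $S=\core$ is the unique minimum-size feasible query.
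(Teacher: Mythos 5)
Your proposal is correct and follows essentially the same route as the paper: show $G^{\text{wit}}$ has no edges via Lemma \ref{criticalLemma}, conclude from Theorem \ref{feasibleMinSize} that $\core$ is the unique minimal feasible query, and identify $\core$ with the set of uncolored uncertain elements using feasibility of that set and the vanishing of $\bo(e)$ for intervals. The only (harmless) difference is that where the paper obtains the circuit certifying $e\in\core$ by invoking Lemma \ref{uncoloredUncertainInRevelation} with $X=E$ and $\B=\A$, you re-derive that special case directly from Lemma \ref{blueAndRedCharacterization} via the span/cospan computation in $\M'_e$, which is exactly what that lemma does internally.
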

\begin{proof}
Lemma \ref{criticalLemma} implies that the graph $G^{\text{wit}}$ has no edges. Applying Theorem \ref{feasibleMinSize} we conclude that the only minimum-size feasible query is $\core$. We now prove that $\core=S$. Since $S$ is feasible,  we have  that $\core\subseteq S$. For the other inclusion, set $X=E$, $\B=\A$ on Lemma \ref{uncoloredUncertainInRevelation} to conclude that for every element $e\in S$, there is a circuit $C$ in $\M_e'$ such that $e\in C$ and $C\cap [\mi(e) \cup \bo(e)]\neq \emptyset$. Since areas are all intervals and $e$ is uncertain, $\bo(e)=\emptyset$. Therefore $C\cap \mi(e)\neq \emptyset$, and Lemma \ref{coreCharacterization} allows us to conclude that $e \in \core$.
\end{proof}
\paragraph*{MST with 0-1 Uncertainty Areas}
% The following corollary shows how to quickly find the minimum-size feasible queries for the MST problem on $(G,\A)$. Recall that every connected graph $G$ decomposes into 2-connected \emph{blocks} attached to each other on cut-vertices.
\begin{corollary}\label{feasible01}
Let $G$ be a connected graph such that for every $e\in E(G)$, $\A(e)=\{0,1\}$. A set $F\subseteq E(G)$ is a  minimal feasible query for the MST problem if and only if $F$ contains all but one edge from  each 2-connected component of $G$. 
\end{corollary}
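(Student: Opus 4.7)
The plan is to apply Theorem \ref{feasibleMinSize} after explicitly computing $\core$ and the connected components of $G^{\text{wit}}$ for this particular instance. First, I would unpack the definitions for an edge $e$ with $\A(e)=\{0,1\}$: we have $L_e=0$, $U_e=1$, and since every other edge $f$ has $\A(f)=\{0,1\}$ with $(0,1)\cap\{0,1\}=\emptyset$, one reads off $\lo(e)=\hi(e)=\mi(e)=\emptyset$ and $\bo(e)=E(G)-e$. In particular $\M'_e=\M/\lo(e)\setminus\hi(e)=\M$ for every edge $e$.

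Since $\mi(e)=\emptyset$ for every $e$, Lemma \ref{coreCharacterization} yields $\core=\emptyset$. Next, I would invoke Lemma \ref{criticalLemma}(i): two distinct edges $e,f$ satisfy $ef\in E^{\text{wit}}$ if and only if $\A(e)=\A(f)=\{0,1\}$ (which holds automatically) and there is a circuit of $\M'_e=\M$ containing both $e$ and $f$. In graph-theoretic terms, this is exactly the condition that some cycle of $G$ contains both $e$ and $f$.

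The key step is to translate this to the language of 2-connected components. I would use the classical fact that two edges of a connected graph lie on a common cycle if and only if they belong to the same block (2-connected component); combined with Lemma \ref{criticalLemma}(ii), which forces the connected components of $G^{\text{wit}}$ to be cliques, this identifies the connected components of $G^{\text{wit}}$ with the edge sets of the blocks of $G$: each block containing $\geq 2$ edges becomes a clique of $G^{\text{wit}}$, while each bridge yields an isolated vertex.

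Finally, I would apply Theorem \ref{feasibleMinSize}: because $\core=\emptyset$, the minimal feasible queries are precisely those sets $F\subseteq E(G)$ that contain all but one edge of each connected component of $G^{\text{wit}}$. Under the identification above, this is exactly the statement that $F$ contains all but one edge of each 2-connected component of $G$ (with the convention that a bridge-block contributes no edge to $F$, consistent with \textquotedblleft all but one\textquotedblright\ being empty when the component has a single vertex). The only delicate point in the whole argument is the standard graph-theoretic fact that common-cycle membership coincides with being in the same block, which I would cite rather than reprove.
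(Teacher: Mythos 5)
Your proof is correct and follows essentially the same route as the paper's: compute $\lo,\hi,\mi,\bo$ to get $\core=\emptyset$ and $\M'_e=\M$, characterize $E^{\text{wit}}$ as pairs of edges on a common cycle, identify the components of $G^{\text{wit}}$ with the blocks of $G$, and conclude via Theorem \ref{feasibleMinSize}. The only cosmetic difference is that you invoke Lemma \ref{criticalLemma}(i) where the paper cites Lemma \ref{witnessSetsCharacterisation} directly, which amounts to the same characterization in this instance.
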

\begin{proof}
    Note that for each $e$, $\mi(e)=\emptyset$. Therefore, by Lemma \ref{coreCharacterization}, $\core=\emptyset$. Moreover, for every $e$, $\M'_e=\M$. Hence, by Lemma \ref{witnessSetsCharacterisation}, $ef \in E^{\text{wit}}$ if and only if there is a cycle $C$ in the graph $G$ containing both. In other words, the connected components of  $G^{\text{wit}}$ correspond exactly to the edge-sets of the blocks (2-connected components) of $G$. The result then follows from Theorem \ref{feasibleMinSize} 
\end{proof}
\section{Algorithmic Implementations for coloring, for finding $\A$-bases and for finding minimum cost feasible queries}\label{sec:algorithm}

\subsection{Coloring algorithms}

Recall that for any $Q \subseteq E$ and $e \notin Q$, one can decide if $e \in \spn Q$ by: 
\begin{romanenumerate}
\item Selecting a basis of $Q$ (for example, via the greedy algorithm with $O(|Q|)$ calls to the independence oracle),
\item Checking if $Q+e$ is not independent.
\end{romanenumerate}

Hence, after sorting the elements by infima and suprema, we can decide if an element is blue by using Lemma \ref{blueAndRedCharacterization}. We simply check for each $e$ if $e\notin \spn F(e)$. To test for redness we recall that $\cospn(X)=X \cup \{e\in E : e\notin \spn[(E-e)\setminus X] \}$. Since $e\notin F^ *(e)$, we just check if $e \in \spn ( (E-e) \setminus F^*(e))$. 

By the previous paragraph, we can color all elements in $O(|E|^2)$ time. Additionally, we can use Theorem \ref{UniformBasesCharacterization} to provide an algorithm that finds uniformly optimal bases in $O(|E|^2)$ time.

\subsection{A faster algorithm for finding $\A$-bases.}

 Lemma \ref{closureLemma} shows that the uncertainty matroids $(\M,\cl \A)$ and $(\M,\A)$ have the same colors. Aditionally, they have the same certain elements. By Theorem \ref{UniformBasesCharacterization} we know that uniformly optimal bases only depend on colors and which elements are certain. Therefore, $(\M, \cl \A)$ and $(\M,\A)$ have the same set of uniformly optimal bases. Consequently, we can use any algorithm for closed intervals in the general case, simply by replacing $\A$ with $\cl A$. In particular, we can use the following regret-based algorithm by Kasperski and Zielinski \cite{KasperskiZ06}, which relies only on executing 2 greedy algorithms with appropriate weight functions.
 
 	\begin{algorithm}[H]
	\caption{\texttt{Regret based UOB algorithm}}
	\begin{algorithmic}[1]
		\Statex \textbf{Input:} $\langle \M,\A\rangle$ where $(\M,\A)$ is an uncertainty matroid.
		\Statex \textbf{Output:} An $\A$-basis if one exists, otherwise \texttt{FALSE}.
		\For{$e \in E$}
			\State $w_e \gets \frac{L_e+U_e}{2}$
		\EndFor
		\State $T \gets \texttt{greedy}(\M,w)$;
		\For{$e \in T$}
			\State $w'_e \gets U_e$
		\EndFor
		\For{$e \notin T$}
		\State $w'_e \gets L_e$
		\EndFor
		\State $T'\gets \texttt{greedy}(\M,w')$;
		\If{$w'(T')<w'(T) $}
			\State\Return \texttt{FALSE};
		\Else
			\State\Return $T$;
		\EndIf
	\end{algorithmic}
\end{algorithm}		

This provides an algorithm faster than that of the previous subsection, taking only $O(|E| \log |E)$ time.

\subsection{An algorithm for finding minimum cost feasible queries.}

By our assumption on the uncertainty areas, testing membership of an element $f$ in $\lo(e), \hi(e), \bo(e),$ and $\mi(e)$ can be done in constant time. Note that after precomputing all four sets in $O(|E|)$ time, and after computing a base of $\lo(e)$ in $\M$ (also in $O(|E|)$ time), testing independence in the matroid $\M'_e=\M/\lo(e)\setminus \hi(e)$ can be done with a single oracle call to $\M$.

Testing if a given element $f$ is in the same connected component than $g$ in some matroid $\M$ in linear time is simple: compute first an $\M$ base $T$ using the greedy algorithm starting from $g$ (so that $g\in T)$. $f$ is in the same connected component as $g$ if and only if $T+f-g$ is independent in $\M$ and $T+f$ is not.  A naive way to compute all connected components would be to perform this procedure $O(|E|^2)$-times (for every pair $(f,g)$), to get an $O(|E|^3)$-time algorithm. We can do better by using Krogdahl's algorithm \cite{Krogdahl77}. This algorithm computes any single base $T$ and then computes the dependency bipartite graph associated to $T$. This is the graph $H$ with sides $T$ and $E\setminus T$, where $(f,g)\in T\times (E\setminus T)$ is an edge if and only if $T+g-f$ is independent. It turns out that the (graph) connected components of $H$ correspond exactly to the (matroid) connected components of $\M$. Note that $H$ can be constructed by only using the greedy algorithm once (in time $O(|E|)$), and then using an extra $O(|E|^2)$ independence oracle calls, for a total of $O(|E|^2)$ time.

We can use the previous discussion, to provide the following implementation of Corollary \ref{algorithmMCFQ}.

	\begin{algorithm}[H]
	\caption{\texttt{MCFQS algorithm}}
	\begin{algorithmic}[1]
		\Statex \textbf{Input:} $\langle \M,\A,c\rangle$ where $(\M,\A)$ is an uncertainty matroid and $c: E\to \mathbb {R}$ a cost function.
		\Statex \textbf{Output:} A minimum cost feasible query.
		\State $Q\gets \emptyset$;
		\For{$e \in E(M)$}
			\State Compute $\lo(e)$, $\mi(e)$, $\hi(e)$ and $\bo(e)$;
		\EndFor
		\For{$e \in E(M)$}
		    \State $K \gets \{ f \in E(M '_e)  :   \text{$f$ and $e$ are connected in $\M_e'$}\}$;
		    \If{$K \cap \mi(e)=\emptyset$}
		\State $Q \gets Q+e$;
		\EndIf
		\EndFor
		\State $E' \gets \{e \in  E\setminus Q : |A_e|=2 \}$;
				\While{$E'\neq \emptyset $}
			\State Select any $e \in E'$;
			\State $\Gamma \gets \{ f \in E(\M '_e)  :   \text{$f$ and $e$ are connected in $\M_e'$ and $\A(e)=\A(f)$}\}$;
			\State Choose any $e_\Gamma \in \argmax \{ c_f : f\in \Gamma \}$;
			\State $Q \gets Q \cup (\Gamma - e_\Gamma)$; 
			\State $E'\gets E'\setminus \Gamma$;
		\EndWhile
		\State \Return $Q \cup \{e \in E : c_e < 0 \}$;
	\end{algorithmic}
\end{algorithm}	

The second for loop executes Krogdhal's algorithm once for each element in order to compute the connected components in $\M'_e$, taking $O(|E|^3)$ time. Note that after this loop $Q$ is exactly $\core$ by Lemma \ref{coreCharacterization}. The third loop executes Krogdhal's algorithm once for each element picked in the while loop, in order to compute its connected component in $\M'_e$, taking $O(|E|^3)$-time. Note that at each iteration $\Gamma$ is exactly a connected component of $G^\texttt{wit}$ by Lemma \ref{criticalLemma}. Hence, the algorithm terminates in $O(|E|^3)$ time and correctly outputs a minimum cost feasible query set.

\bibliography{Uncertainty}

\appendix

\section{Direct proofs for red cases}

As stated in the main text, many lemmas concerning red elements follows directly from the blue case via a duality argument. For completeness, we also include direct proofs that do not rely on this fact.

\begin{proof}
[Proof of Lemma \ref{blueAndRedCharacterization}, Red case]
    	For the direct implication, define $K= \min\limits_{f \in F^*(e)} U_f - L_e>0$. For each $f\in E-e$ we can choose $\varepsilon_f\in [0,K/2)$ such that $U_f -\varepsilon_f \in \A (f)$ similarly we can pick $\varepsilon_e \in [0,K/2)$ such that $L_e+\varepsilon_e \in \A(e)$. We then consider the following realization $w^* \in \R(E,\A)$:
	\[
	w^*_f = \begin{cases}   L_e + \varepsilon_e  & \text{ if $f=e$,} \\
	U_f - \varepsilon_f  & \text{ if $f\neq e$.}
	\end{cases}
	\]  
	
	Note that if $f \in F^*(e)$ then $w^*_f>w^*_e$, indeed:
	\[
	w^*_f = U_f - \varepsilon_f > U_f - \frac{K}{2} \geq U_f - \frac{U_f-L_e}{2}= L_e + \frac{U_f-L_e}{2} \geq  L_e + \frac{K}{2} > U_e -\varepsilon_e = w^*_e
	\]
	Suppose now that $e \in \cospn F^*(e)$, then there exists a cocircuit $C^* \subseteq F^*(e)+e$ such that $e$ is the lightest element of $C^*$. This implies that $e$ is in every $w$-basis, contradicting that $e$ was red.
	
Let us suppose that $e$ is not red, then there exists a realization $w \in \R(E,\A)$ such that $e$ is in every $w$-basis. Choose $T$ to be any $w$-basis and $C^*$ the fundamental circuit of $E\setminus T+e$, as $e \notin \spn F^*(e) $ we have that $C^*-e \not\subseteq F^*(e)$. Select any $f\in C^*\setminus F^*(e)$, then:
	\[
	w_e \geq L_e \geq U_f \geq w_f
	\]
	We conclude that $T+f-e$ is a $w$-basis that avoids $e$.
\end{proof}

\begin{proof}[Proof of Lemma \ref{blueAndRedUncertain}, Red case]
    Since $T$ is a $w$-basis for every $w \in \R(E,\A)$ it is clear that if $e \notin T$, then $e$ is red.
    
    For the other implication, assume that $e\in T$. Consider $C^*$ the fundamental cocircuit of $E\setminus T+e$, as $e$ is red, we have that $(C^*-e)\not\subseteq F^*(e)$. Considering that $e$ is uncertain there exists a realization $w\in \R(E,\A)$ such that $w_e>L_e$, then selecting $f \in C^* \setminus F^*(e)$ we have:
	\[
	w_e > L_e \geq U_f \geq w_f,
	\]  
	This implies that $w(E\setminus T+e-f)>w(E\setminus T)$ and $w(T-e+f)<w(T)$, which contradicts the fact that $T$ is an $\A$-basis.
\end{proof}

\begin{proof}[Proof of Lemma \ref{blueAndRedSafety}, Red case]
For the direct implication observe that, as $e$ is red, there is a $w$-basis $T^-$ that avoids $e$. Furthermore, $T^-$ is a basis of $\M-e$. Note that $|T^-|=|T|$, then $T$ is also a basis of $\M$. Let us suppose that $T$ is not an $(\M,\A)$-basis. Then, there exists a basis $T'$ of $\M$ and a realization $w \in \R(E,\A)$ such that $w(T')<w(T)$. As $e$ is red in $(\M,\A)$ there exists $T''$, an $(\M,w)-$basis, such that $e \notin T '' $. So, $w(T'')=w(T')$. Notice that $T''$ is a basis of $\M-e$. Since $w(T''-e)<w(T)$ this contradicts the fact that $T$ is $(\M-e, \A|_{A-e})$-basis.

For the converse, it is clear that $T$ is a basis of $\M-e$. Suppose that $T$ is not an $(\M-e,\A|_{A-e})$-basis. Then, we have a basis $T'$  of $\M-e$ and a realization $w \in \R(E-e, \A|_{A-e})$ such that $w(T')<w(T)$. Extending $w$ to $\hat w \in \R(E,\A)$ by selecting any $\hat w_e \in \A(e)$ we have that $\hat w (T')< \hat w(T)$. Since $|T'|=|T|$ we have that $T'$ is a basis of $\M$ which contradicts that $T$ is an $(\M,\A)$-basis.
\end{proof}

\begin{proof}[Proof of Lemma \ref{colorPreservation}, Red case]
    	Define $w,w^*:E \to \mathbb R$ by:
    		\[
	w_x = \begin{cases}
		U_f & \text{if $x=f$,}\\
		L_x & \text{if $x\neq f$.}
	\end{cases} \qquad 	w^*_x = \begin{cases}
	L_f & \text{if $x=f$,}\\
	U_x & \text{if $x\neq f$.}
	\end{cases}
	\]
	We begin by considering $f$ red. Since $f$ is red in $(\M,\cl \A)$, there exists some $(\M,w^*)$-basis $T$ such that $f \notin T$. We first prove that $e\notin T$, suppose not, we take $C^*$ the fundamental cocircuit of $E\setminus T+e$. As $e$ is red, we have that $e \notin \cospn F^*(e)$ and $(C-e)\not \subseteq F(e)$. Then, we can select $g \in C^* \not\subseteq F^*(e)$ such that $g\neq e$. Since $e$ is uncertain: 
	\[w^*_e=U_e > L_e \geq U_g \geq w^*_g ,\]  
	it follows that $w^*(E\setminus T-g+e)>w^*(E\setminus T)$ and $w^*(T-e+g)<w^*(T)$, which contradicts the fact that $T$ is an $(\M,w^*)$-basis. Therefore $E\setminus T$ is a maximum weight cobasis (w.r.t $(\M-e,w^*)$) such that $f\in T$. Hence, \[f\notin \cospn_{\M-e} \{g \in E-e : w^*_g > w^*_f \}=\cospn_{\M-e} (F^*(f)-e).\] We conclude that $f$ is red in $(\M-e, \A|_{A-e})$ by using Lemma \ref{blueAndRedCharacterization}.
	
		 We now turn to the case when $f$ is blue. As $f$ is blue in $(\M,\cl \A)$, there exists some $(\M,w)$-basis $T$ such that $f \in T$. If $e\in T$ then $T$ is an $(\M-e,w|_{E-e})$-basis and $f\notin \spn_{\M-e} (\{g \in E-e : w_g < w_f \})=\spn_{\M-e} (F(f)-e)$. By Lemma \ref{blueAndRedCharacterization} we conclude that $f$ is blue in $(\M,\A)$. Therefore, we are only left with the case $e\in T$. Since $e$ is red in $(\M,\cl \A)$, there exists some $(\M,w)$-basis $T'$ such that $e\notin T'$. We can find $e ' \in T'\setminus T$  such that $T-e+e'$ is an $(\M,w)$-basis, from here we can proceed as before but working with $T-e+e'$ instead of $T$.

\end{proof}

\end{document}